\newtheorem{lemma}{Lemma}
\newtheorem{assumption}{Assumption}
\newtheorem{definition}{Definition}
\newtheorem{remark}{Remark}
\newtheorem{example}{Example}
\newtheorem{proposition}{Proposition}
\newtheorem{theorem}{Theorem}
\newtheorem{corollary}{Corollary}
\crefname{figure}{Fig.}{Fig.}
\crefname{example}{Example}{Example}
\crefname{proposition}{Proposition}{Proposition}
\crefname{corollary}{Corollary}{Corollary}
\crefname{theorem}{Theorem}{Theorem}
\crefname{lemma}{Lemma}{Lemma}
\crefname{assumption}{Assumption}{Assumption}
\crefname{appendix}{Appendix}{Appendix}
\crefname{definition}{Definition}{Definition}
\crefname{remark}{Remark}{Remark}
\crefname{table}{Table}{Table}
\crefname{example}{Example}{Example}
\newcommand{\tol}{\beta}
\newcommand{\paths}{\mathcal{P}} 
\title{Average Unfairness in Routing Games}
\author{Pan-Yang Su \thanks{Equal contribution} \thanks{University of California, Berkeley}
\and Arwa Alanqary \footnotemark[1] \footnotemark[2]
\and Bryce L. Ferguson \thanks{Thayer School of Engineering at Dartmouth College}
\and Manxi Wu \footnotemark[2]
\and Alexandre M. Bayen \footnotemark[2]
\and Shankar Sastry \footnotemark[2]
}
\date{}
\begin{document}

\maketitle

\begin{abstract}
We propose average unfairness as a new measure of fairness in routing games, defined as the ratio between the average latency and the minimum latency experienced by users. This measure is a natural complement to two existing unfairness notions: loaded unfairness, which compares maximum and minimum latencies of routes with positive flow, and user equilibrium (UE) unfairness, which compares maximum latency with the latency of a Nash equilibrium. We show that the worst-case values of all three unfairness measures coincide and are characterized by a steepness parameter intrinsic to the latency function class. We show that average unfairness is always no greater than loaded unfairness, and the two measures are equal only when the flow is fully fair. {Besides that, we offer a complete comparison of the three unfairness measures, which, to the best of our knowledge, is the first theoretical analysis in this direction.} Finally, we study the constrained system optimum (CSO) problem, where one seeks to minimize total latency subject to an upper bound on unfairness. We prove that, for the same tolerance level, the optimal flow under an average unfairness constraint achieves lower total latency than any flow satisfying a loaded unfairness constraint. {We show that such improvement is always strict in parallel-link networks and establish sufficient conditions for general networks. We further illustrate the latter with numerical examples.} Our results provide theoretical guarantees and valuable insights for evaluating fairness-efficiency tradeoffs in network routing. 
\end{abstract}

\textbf{Key words: } Routing games, Fairness, Constrained system optimum.


\section{Introduction}
In the classical routing games model, user behavior naturally leads to equilibria that can be significantly inefficient compared to centrally optimized solutions. 
While these optimal flows achieve the lowest total latency among the population, such centralized routing can often compromise \textit{fairness}, potentially discriminating against certain users by assigning them disproportionately high latency routes. 
This fundamental tension between efficiency and fairness has motivated a growing body of work aimed at formalizing measures of unfairness, understanding its trade-offs with efficiency, and developing algorithms for computing flows that minimize the total latency under unfairness constraints.

A key concern in this line of research is defining what it means for a flow to be fair. 
Much of the literature adopts worst-case fairness measures, which quantify fairness by the experience of the most disadvantaged user. 
While this perspective provides strong guarantees, it is brittle: a single outlier can dominate the fairness measure, regardless of how many users actually suffer such extreme latency. 
As a result, worst-case fairness measures may be overly pessimistic and fail to reflect the average user experience.

In this work, we introduce a new \textit{average} measure of unfairness that captures the average envy experienced by users in a given flow. 
By shifting the focus from the worst-off user to the expected delay in the network, our approach provides a more stable way to reason about fairness in routing games. 
In light of this newly introduced measure, we revisit the question: \textit{How unfair is optimal routing?} To that end, we prove a tight bound on the average unfairness of an optimal flow for a class of latency functions. 
We also revisit two canonical notions of worst-case unfairness: the \textit{loaded unfairness}, which is the maximum ratio between two users' experienced latency, and the \textit{user equilibrium unfairness}, which is the maximum ratio between a user's latency at the given flow and the latency they experience at equilibrium. 
Interestingly, the bound that we derive for the average unfairness matches the bounds for both of these measures, but the lower bound is attained at different instances. 
Further, for any given instance and a feasible flow exhibiting some level of unfairness, we show that the average unfairness is always strictly less than the corresponding loaded unfairness.  
This highlights that worst-case measures can overstate the level of unfairness and it motivates the use of average-case measures as user constraints.
{Besides that, we offer a complete comparison of all the three unfairness measures. To the best of our knowledge, since their introduction by \cite{angelelli2021system}, these relationships have not been formally studied. Our analysis complements prior work \cite{angelelli2021system}, which investigated these connections only through numerical experiments, by providing the first rigorous theoretical treatment.}
{
Finally, we study the constrained system optimum problem (CSO) under the alternative unfairness measures. For a fixed tolerance, the CSO solution with the average-unfairness constraint can attain a strictly lower cost than the loaded-unfairness solution.
We derive verifiable conditions under which this strict improvement is guaranteed. These conditions are expressed in terms of the path support of the loaded-unfairness CSO solution and always hold in parallel-link networks, ensuring strict improvement in that class. On benchmark networks with general topology, our numerical experiments frequently exhibit the same effect.
}

The main contributions of this paper are summarized as follows.
\begin{enumerate}
    \item We introduce average unfairness as an intuitive and interpretable measure of unfairness and derive its exact bound under general network topologies and latency function classes.
    \item We provide a comprehensive theoretical comparison between average unfairness and two commonly used measures in the literature (loaded and user equilibrium (UE) unfairness). To the best of our knowledge, this is the first theoretical comparison since the numerical study in \cite{angelelli2021system}.
    \item We analyze the constrained system optimum (CSO) problem under all three unfairness measures. We establish the structure of its solutions in parallel-link networks, extending the result in \cite{10.1145/380752.380783}. Moreover, we show that the CSO solution under average unfairness can strictly outperforms that under loaded unfairness, and establish conditions under which this improvement is guaranteed. We further validate these results through simulations.
\end{enumerate}

\subsection{Related Work} 
\textbf{Unfairness measures and constrained system optimum problem.}
The notion of unfairness in routing games and its first formalization were introduced by Jahn et al. \cite{doi:10.1287/opre.1040.0197}. 
This work proposed a menu of unfairness measures each capturing different aspects of unfairness.
One of the most studied among these is the \textit{loaded unfairness}, which measures the maximum ratio between the latencies of used paths. It captures the worst-case envy among users by quantifying how much more latency one user experiences compared to another. This measure has been further analyzed in \cite{doi:10.1287/opre.1070.0383,basu2017reconciling}, and a closely related definition was introduced and analyzed in \cite{jalota2023balancing}. 
An analogous measure, the minmax fairness or envey-ratio, was proposed and analyzed for the atomic congestion games model in \cite{chakrabarty2005fairness,gollapudi2025fairness}.

Another measure, the \textit{user equilibrium unfairness}, captures the deviation from the Nash equilibrium flow. It is defined as the maximum ratio between the latency of a path used in the current flow and that of a path used at equilibrium. This measure was studied in \cite{10.1145/506147.506153} to bound the unfairness of an optimal flow. 
Jahn et al. \cite{doi:10.1287/opre.1040.0197} also introduced the \textit{normal unfairness}, which evaluates fairness based on the ratio of the normal length of a given path to that of the shortest path. The normal length refers to any property of the path that is fixed a priori and is independent
of the flow. While attractive for its simplicity \cite{angelelli2021system}, this measure fails to reflect the actual level of experienced unfairness under realistic routing conditions \cite{angelelli2020minimizing}.
An improved notion of normal unfairness that uses the path's latency at the Nash equilibrium flow as the normal length was proposed and analyzed in \cite{schulz2006efficiency}.
Finally, the \textit{free-flow unfairness}—the least studied of the four—measures the maximum ratio between the latency of a used path and its latency under no congestion. It can be viewed as an intermediary between the normal and loaded unfairness measures.

We note that all these unfairness measures that are analyzed in the context of routing games are worst-case measures, focusing on the extremes of the population. 
In contrast, the average unfairness measure we propose shifts the focus to quantifying the average envy each user feels toward the most-advantaged user in the population. 
This approach aligns with some fairness notions studied in the broader resource allocation literature, where population-level measures are common. 
A notable example is the Gini index \cite{farris2010gini}, which captures average pairwise differences in users’ costs normalized by the average cost. 
A key technical distinction is that the Gini index is based on cost differences, while our measure uses cost ratios. 
Conceptually, while the Gini index reflects the average envy between all pairs of users, our measure captures the average envy of each user relative to the best-off user, providing a perspective more aligned with the previously studied measures in this literature. 

Previous numerical comparisons between these measures \cite{doi:10.1287/opre.1040.0197,jalota2023balancing} highlight the fact that they are not equivalent. 
In this work, we strengthen the understanding of these relationships by establishing theoretical comparisons between these measures, including the newly introduced average unfairness, whenever they are comparable, and illustrate with examples where such relationships do not hold.

The study of unfairness is motivated by the \textit{constrained system optimum (CSO)} problem which aims at finding flows that minimize the total latency among those flows with bounded unfairness.
This problem is of great practical interest as it guides the design of routing algorithms and recommendations. 
Our average-case unfairness measure gives rise a new formulation of the CSO model that is more realistic and potentially with better (more efficient) solutions. 

\textbf{Approximate equilibrium and price of anarchy and stability.}
Unfairness measures are closely related to the concept of approximate equilibria in routing games, though they are motivated by distinct concerns: fairness versus bounded rationality. 
Despite their different motivations, the definitions of measures and solution sets, as well as the analytical tools used in both lines of research are similar.
A key distinction, however, lies in the fact that approximation measures fully characterize the set of exact Nash equilibria as the approximation tolerance goes to zero, a property not shared by any of the unfairness measures commonly studied. 
This creates subtle differences in the way these measures are defined.
For instance, the multiplicative definition of approximate equilibrium \cite{10.1145/506147.506153} compares the cost of a used path to that of any alternative, whereas the loaded unfairness introduced above only compares used paths.
Moreover, performance measures developed for approximate equilibria, such as the price of stability (POS) \cite{christodoulou2011performance}, 
or for tolled equilibria, such as the tolled price of anarchy (POA) \cite{ferguson2021impact},
offer a complementary perspectives useful for the study of fairness-related problems, such as the CSO problem. 
In particular, the POS, with respect to a suitable unfairness/approximation measure, offer a meaningful upper bound on the values of the corresponding CSO problem for a class of routing games. 

\section{Preliminaries}
We provide the preliminaries of a routing game, following the model in \cite{10.1145/506147.506153,10.5555/545381.545406}. 

\subsection{The Model}
We consider a directed network $G = (V, E)$ with vertex set $V$, edge set $E$, and $k$ source-destination vertex pairs $\{s_1, t_1\}, ..., \{s_k, t_k\}$, where each pair is referred to as a \emph{commodity}. We denote the set of (simple) $s_i$-$t_i$ paths by $\mathcal{P}_i$, which is assumed to be non-empty, and define $\mathcal{P} = \cup_i \mathcal{P}_i$. A \emph{flow} is a function $f: \mathcal{P} \rightarrow \mathbb{R}^+$; for a fixed flow $f$ we define $f_e = \sum_{P \in \mathcal{P}: e \in P} f_P$. We associate a finite and positive \emph{rate} $r_i$ with each pair $\{s_i, t_i\}$, the amount of flow with source $s_i$ and destination $t_i$; a flow $f$ is said to be \emph{feasible} if for all $i$, $\sum_{P \in \mathcal{P}_i} f_p = r_i$. Finally, each edge $e\in E$ is given a load-dependent \emph{latency function} that we denote by $\ell_e$. For each $e\in E$, we assume that the latency function $\ell_e$ is nonnegative, continuous, and nondecreasing. Following \cite[Definition 2.5]{10.1145/509907.509971}, the following definition will be useful later.
\begin{definition}
A latency function $\ell$ is standard if $x\cdot \ell(x)$ is convex on $[0,\infty )$.
\end{definition}

We will call the triple $(G, r, \ell)$ an \emph{instance}. The latency of a path $P$ with respect to a flow $f$ is defined as the sum of the latencies of the edges in the path, denoted by $\ell_P(f) = \sum_{e \in P} \ell_e(f_e)$\footnote{Note that $\ell_e(f_e)$ only depends on $f_e$, the amount of traffic on edge $e$, but $\ell_P(f)$ depends not only on $f_P$, the amount of traffic on path $P$, but also on the amount of traffic on each edge $e \in P$. Specifically, $\ell_P(f)$ depends on $\{f_e | e \in P\}$.}. We define the \emph{cost} $C(f)$ of a flow $f$
in $(G, r, \ell)$ as the total latency incurred by $f$; that is,
\begin{equation}
C(f) = \sum_{P \in \mathcal{P}} \ell_P(f) f_P.
\end{equation}
By summing over the edges in a path $P$ and reversing the order of summation, we may also write $C(f) = \sum_{e\in E} \ell_e(f_e)f_e$. Also, the total latency of commodity $i$ is $C_i(f) = \sum_{P \in \mathcal{P}_i} \ell_P(f)f_P$. 

Finally, given a class $\mathcal{L}$ of latency functions, we define a class of instances $\mathcal{G}(\mathcal{L}) = \{(G, r, \ell)| G \in \mathcal{G}^{dir}_{fin}, r \succ 0, \ell_e \in \mathcal{L}\}$, where $\mathcal{G}^{dir}_{fin}$ is the set of finite directed graphs. A class $\mathcal{G}(\mathcal{L})$ represents all routing games on finite directed graphs with latency functions from the class $\mathcal{L}$. 

\subsection{Nash and Optimal Flows}
\label{sec: Nash and Optimal Flows}
Fix a flow $f$ feasible for an instance $(G, r, \ell)$. We say that $f$ is an \emph{optimal flow} if it minimizes $C(f)$. We denote the set of optimal flows by $OPT(G, r, \ell)$. We say that it is a \emph{Nash flow} if for every $i$ and every two $s_i$-$t_i$ paths $P_1, P_2 \in \mathcal{P}_i$ with $f_{P_1} > 0$, $\ell_{P_1}(f) \leq \ell_{P_2}(f)$. With nondecreasing and continuous latency functions, both optimal flows and Nash flows exist\footnote{Continuity alone suffices for the existence of an optimal flow due to the extreme value theorem, but the existence of a Nash flow needs both assumptions; see \cite[Lemma 2.6]{10.1145/506147.506153}.}.

Given an instance $(G, r, \ell)$ with standard latency functions, define the \emph{marginal cost function} $\hat{\ell}_e = \frac{d}{dx}(x\cdot \ell_e(x))$. Then, a flow $\hat{f}$ feasible for $(G, r, \ell)$ is optimal if and only if it is a Nash flow for $(G, r, \hat{\ell})$; that is, for every $i$ and every two $s_i$-$t_i$ paths $P_1, P_2 \in \mathcal{P}_i$ with $f_{P_1} > 0$, $\hat{\ell}_{P_1}(f) \leq \hat{\ell}_{P_2}(f)$.

\subsection{Unfairness}
We revisit two load-dependent unfairness measures introduced in  \cite{doi:10.1287/opre.1040.0197} and define the average unfairness measure. Note that we define $0/0 := 1$, so a flow where all traffic incurs zero latency has an unfairness of 1 (totally fair).
\begin{definition}[Loaded unfairness]
Given an instance $(G, r, \ell)$ and a feasible flow $f$, the \emph{loaded unfairness} $U^{L}(G, r, \ell, f)$ is defined as
\begin{equation}
\label{eq: L def}
U^{L}(G, r, \ell, f) = \max\left\{\frac{\ell_P(f)}{\ell_Q(f)} | P, Q \in \mathcal{P}_i, f_P, f_Q > 0, i \in \{1, 2, ..., k\}\right\}.
\end{equation}
\end{definition}
We denote $U^{L}_i(G, r, \ell, f) = \max\{\frac{\ell_P(f)}{\ell_Q(f)} | P, Q \in \mathcal{P}_i, f_P, f_Q > 0\}$ for commodity $i$'s loaded unfairness. 
For a given instance, we denote its loaded unfairness $U^L(G, r, \ell) := \sup_{f^{\star} \in OPT(G, r, \ell)} U^{L}(G, r, \ell, f^{\star})$.
Given a class of latency functions $\mathcal{L}$, we denote its loaded fairness $U^{L}(\mathcal{L}) = \sup_{(G, r, \ell) \in \mathcal{G}(\mathcal{L})}U^{L}(G, r, \ell)$.

\begin{remark}
In defining $U^L(G, r, \ell)$, we take the supremum rather than the infimum to capture the worst-case scenario, although, in practice, one might seek an optimal flow that minimizes unfairness.
\end{remark}

\begin{remark}
\label{remark:unfairness_unique}
Different flows, particularly optimal flows $f^{\star} \neq \tilde{f}^{\star}$, that induce the same edge flows $f^{\star}_e = \tilde{f}^{\star}_e, \forall e \in E$, can have different values of loaded unfairness $U^L(f^{\star}) \neq U^L(\tilde{f}^{\star})$. 
To illustrate this, consider the network in \cref{fig:pigou-series} which consists of two Pigou networks connected in series. 
{When routing one unit of traffic, any flow that induces traffic of $1/2$ to each edge is optimal, but splitting the unit demand equally between paths 1 and 2 yields
unfairness 2 (worst case unfairness for instances with linear latency functions \cite{doi:10.1287/opre.1070.0383}), while splitting between paths 3 and 4 yields unfairness 1 (totally fair).}
This example shows that even though these flows have the same cost, they induce different levels of unfairness. 
\end{remark}

\begin{figure}[h]
    \centering
    \begin{tikzpicture}[ ->, thick, node distance=2.5cm]
        \node[circle, draw, minimum size=0.5cm, inner sep=1pt] (A1) {$s$};
        \node[circle, draw, minimum size=0.5cm, inner sep=1pt, right=of A1] (M) {};
        \node[circle, draw, minimum size=0.5cm, inner sep=1pt, right=of M] (B2) {$t$};
        
        \draw (A1) to[bend left=30] node[midway, above] {$x$} (M);
        \draw (A1) to[bend right=30] node[midway, below] {$1$} (M);
        
        \draw (M) to[bend left=30] node[midway, above] {$x$} (B2);
        \draw (M) to[bend right=30] node[midway, below] {$1$} (B2);
        
        \begin{scope}[yshift=-0.2cm, xshift=8cm]
        \node[black] at (0,1) {\small path 1};
        \coordinate (start1) at (1,1);
        \coordinate (mid1) at (2,1);
        \coordinate (end1) at (3,1);
        \draw[->, black] 
        (start1) to[bend left=30] (mid1) 
                 to[bend left=30] (end1);
        \node[black] at (0,0.5) {\small path 2};
        \coordinate (start2) at (1, 0.5);
        \coordinate (mid2) at (2, 0.5);
        \coordinate (end2) at (3, 0.5);
        \draw[->, black] 
        (start2) to[bend right=30] (mid2) 
                 to[bend right=30] (end2);
        \node[black] at (0,0) {\small path 3};
        \coordinate (start3) at (1,0);
        \coordinate (mid3) at (2,0);
        \coordinate (end3) at (3,0);
        \draw[->, black] 
        (start3) to[bend left=30] (mid3) 
                 to[bend right=30] (end3);
        \node[black] at (0,-0.5) {\small path 4};
        \coordinate (start4) at (1,-0.5);
        \coordinate (mid4) at (2,-0.5);
        \coordinate (end4) at (3,-0.5);
        \draw[->, black] 
        (start4) to[bend right=30] (mid4) 
                 to[bend left=30] (end4);
        \end{scope}
    \end{tikzpicture}
    \caption{Two Pigou networks connected in series.}
    \label{fig:pigou-series}
\end{figure}
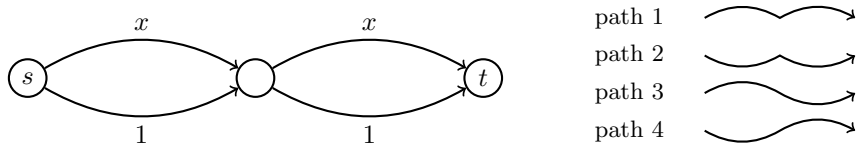

\begin{definition}[UE unfairness]
Given an instance $(G, r, \ell)$ and a feasible flow $f$, the \emph{user equilibrium (UE) unfairness} $U^{UE}(G, r, \ell, f)$ is defined as
\begin{equation}
\label{eq: UE def}
\begin{aligned}
U^{UE}(G, r, \ell, f) = \max\biggl\{&\frac{\ell_P(f)}{\ell_Q(f^{NE})} | P, Q \in \mathcal{P}_i, f_P, f^{NE}_Q > 0, \ i \in \{1, 2, ..., k\} \biggr\},
\end{aligned}
\end{equation}
where $f^{NE}$ is a Nash flow. 
\end{definition}
We denote $U^{UE}_i(G, r, \ell, f) = \max\{\frac{\ell_P(f)}{\ell_Q(f^{NE})} | P, Q \in \mathcal{P}_i, f_P, f^{NE}_Q > 0\}$ for commodity $i$'s UE unfairness.
For a given instance, we denote its UE unfairness $U^{UE}(G, r, \ell) := \sup_{f^{\star} \in OPT(G, r, \ell)} U^{UE}(G, r, \ell, f^{\star}) $.
Given a class of latency functions $\mathcal{L}$, we denote its UE fairness $U^{UE}(\mathcal{L}) = \sup_{(G, r, \ell) \in \mathcal{G}(\mathcal{L})}U^{L}(G, r, \ell)$.

\begin{remark}
Recall from \cite[Lemma 2.6]{10.1145/506147.506153} that any instance $(G, r, \ell)$ with continuous and nondecreasing latency functions admits a Nash flow. Also, if $f^{NE}$ and $\tilde{f}^{NE}$ are Nash flows, then $C_i(f^{NE}) = C_i(\tilde{f}^{NE})$ for any commodity $i$. This fact, together with the property that all paths with positive flow in a Nash flow have equal latency, ensures that $\ell_Q(f^{NE})$ in (\ref{eq: UE def}) is well-defined.
\end{remark}

Consider an instance $(G, r, \ell)$ and a feasible flow $f$. For simplicity, consider a single commodity. If we consider a commuter traveling on a path $P \in \mathcal{P}$, their ex post unfairness is $\frac{\ell_P(f)}{\ell_Q(f)}$, where $Q$ is the minimum-latency path with a positive flow. Then, if we treat the flow $f$ as an ex ante route assignment, the expected unfairness of a commuter in the network will be $\frac{C(f)}{r\ell_Q(f)}$. This rationale gives rise to the following definition of the average unfairness measure.

\begin{definition}[Average unfairness]
Given an instance $(G, r, \ell)$ and a feasible flow $f$, the \emph{average unfairness} $U^{A}(G, r, \ell, f)$ is defined as
\begin{equation}
\label{eq: average}
U^{A}(G, r, \ell, f) = \max\left\{\frac{C_i(f)}{r_i\ell_Q(f)} | Q\in \mathcal{P}_i, f_Q > 0, i \in \{1, 2, ..., k\}\right\}.
\end{equation}
\end{definition}
We denote $U^{A}_i(G, r, \ell, f) = \max\{\frac{C_i(f)}{r_i\ell_Q(f)} | Q\in \mathcal{P}_i, f_Q > 0\}$ for commodity $i$'s average unfairness. 
For a given instance, we denote its average unfairness $U^{A}(G, r, \ell) := \sup_{f^{\star} \in OPT(G, r, \ell)}U^{A}(G, r, \ell, f^{\star})$.
Given a class of latency functions $\mathcal{L}$, we denote its average fairness
$U^{A}(\mathcal{L}) = \sup_{(G, r, \ell) \in \mathcal{G}(\mathcal{L})}U^{A}(G, r, \ell)$.

\begin{remark}
We view different commodities as incomparable and take the maximum across different commodities in (\ref{eq: average}). One can also define the average unfairness to be the average of different commodities' average unfairness. All the results in this work hold for both definition. 
\end{remark}

\section{Unfairness Bounds and Comparisons}
\subsection{Unfairness Bounds}
In this section, we will prove the main result that gives the exact bound of average unfairness of a class of instances under general conditions. The bound is reminiscent of the results in \cite[Theorem 4.2]{doi:10.1287/opre.1070.0383} and \cite[Theorem 3.1]{10.5555/545381.545406}, which we unify here for completeness. As with those results, the bound depends on the \textit{steepness} of the class of latency functions.
\begin{definition}[Steepness]
    Let $\mathcal{L}$ denote a class of differentiable latency functions. For $\ell \in \mathcal{L}$, define the \emph{steepness} $\gamma(\ell)$ as $\gamma(\ell) = \sup_{x > 0} \frac{\hat{\ell}(x)}{\ell(x)}$. Define $\gamma(\mathcal{L}) = \sup_{\ell \in \mathcal{L}} \gamma(\ell)$. 
\end{definition}
We also consider the following set of assumptions.
\begin{assumption}
\label{assump: 1}
We assume that $\mathcal{L}$ is a family of differentiable and standard latency functions.
\end{assumption}

\begin{assumption}
\label{assump: 2}
We assume that $\mathcal{L}$ includes all the constant functions.
\end{assumption}

\begin{theorem}
\label{thm: 2}
Let $\mathcal{L}$ be a class of latency functions satisfying \cref{assump: 1} and \cref{assump: 2}. Then $U^{L}(\mathcal{L}) = U^{UE}(\mathcal{L}) = U^A(\mathcal{L}) = \gamma(\mathcal{L})$.
\end{theorem}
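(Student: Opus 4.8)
The plan is to prove each of $U^{L}(\mathcal{L}) = U^{UE}(\mathcal{L}) = U^{A}(\mathcal{L}) = \gamma(\mathcal{L})$ by a matching pair of inequalities, reusing the same tools across the three measures. For the upper bounds I would lean on the characterization from \cref{sec: Nash and Optimal Flows}: an optimal flow $f^{\star}$ of $(G,r,\ell)$ is a Nash flow of $(G,r,\hat{\ell})$, so within each commodity $i$ every flow-carrying $s_i$--$t_i$ path shares the marginal latency $\hat{\ell}^{\star}_i$, which equals $\min_{P\in\mathcal{P}_i}\hat{\ell}_P(f^{\star})$; pair this with the edgewise sandwich $\ell_e(x)\le\hat{\ell}_e(x)\le\gamma(\mathcal{L})\,\ell_e(x)$ (from the steepness definition when $\ell_e(x)>0$, trivial with both sides zero when $\ell_e(x)=0$). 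Then for flow-carrying $P,Q\in\mathcal{P}_i$, $\ell_P(f^{\star})\le\hat{\ell}_P(f^{\star})=\hat{\ell}_Q(f^{\star})\le\gamma(\mathcal{L})\,\ell_Q(f^{\star})$, giving $U^{L}(\mathcal{L})\le\gamma(\mathcal{L})$ (the convention $0/0:=1$ disposing of the degenerate case). For average unfairness I would first record the pointwise estimate $U^{A}(G,r,\ell,f)\le U^{L}(G,r,\ell,f)$: for the minimizing $Q$, the ratio $\tfrac{C_i(f)}{r_i\ell_Q(f)}$ equals $\sum_{P:f_P>0}\tfrac{f_P}{r_i}\cdot\tfrac{\ell_P(f)}{\ell_Q(f)}$, a convex combination of terms each at most $U^{L}_i(f)$; hence $U^{A}(\mathcal{L})\le U^{L}(\mathcal{L})\le\gamma(\mathcal{L})$. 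For UE unfairness, all flow-carrying paths of commodity $i$ at a Nash flow $f^{NE}$ have a common latency, call it $L^{NE}_i$, so $U^{UE}_i(G,r,\ell,f^{\star})=\big(\max_{P:f^{\star}_P>0}\ell_P(f^{\star})\big)/L^{NE}_i\le\hat{\ell}^{\star}_i/L^{NE}_i$; for a single commodity, $\hat{\ell}^{\star}r=\sum_e\hat{\ell}_e(f^{\star}_e)f^{\star}_e\le\gamma(\mathcal{L})\sum_e\ell_e(f^{\star}_e)f^{\star}_e=\gamma(\mathcal{L})C(f^{\star})\le\gamma(\mathcal{L})C(f^{NE})=\gamma(\mathcal{L})L^{NE}r$ closes the bound (using $C(f^{\star})\le C(f^{NE})$), and the multi-commodity case I would handle following \cite{10.5555/545381.545406}.

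\textbf{Lower bounds.} All three I would obtain from a single family of two-link ``generalized Pigou'' instances, using two limiting regimes. Fix $\varepsilon>0$ and choose $\ell^{\star}\in\mathcal{L}$ and $x_0>0$ with $\ell^{\star}(x_0)>0$, $(\ell^{\star})'(x_0)>0$, and $\hat{\ell}^{\star}(x_0)/\ell^{\star}(x_0)>\gamma(\mathcal{L})-\varepsilon$ (possible by definition of $\gamma(\mathcal{L})$; when $\gamma(\mathcal{L})=\infty$ replace the bound by $1/\varepsilon$). Take one commodity of rate $r$ on two parallel $s$--$t$ links, of latency $\ell^{\star}(x)$ and of constant latency $c:=\hat{\ell}^{\star}(x_0)$ (available by \cref{assump: 2}). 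The constant link has marginal latency $c$, so the flow carrying $x_0$ on the first link and $r-x_0$ on the second is a Nash flow of $(G,r,\hat{\ell})$, hence optimal, for every $r>x_0$; its link latencies are $\ell^{\star}(x_0)$ and $c>\ell^{\star}(x_0)$. Thus $U^{L}(G,r,\ell,f^{\star})\ge c/\ell^{\star}(x_0)=\hat{\ell}^{\star}(x_0)/\ell^{\star}(x_0)$, while $U^{A}(G,r,\ell,f^{\star})=\tfrac{x_0\ell^{\star}(x_0)+(r-x_0)c}{r\,\ell^{\star}(x_0)}\to\hat{\ell}^{\star}(x_0)/\ell^{\star}(x_0)$ as $r\to\infty$; letting $\varepsilon\to0$ gives $U^{L}(\mathcal{L}),U^{A}(\mathcal{L})\ge\gamma(\mathcal{L})$. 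For $U^{UE}$ I would instead take $r=x_0+\delta$ with $\delta$ small, so that $\ell^{\star}(x_0+\delta)<c$ and every Nash flow routes all traffic on the first link with $L^{NE}=\ell^{\star}(x_0+\delta)$, whereas $f^{\star}$ still puts latency $c$ on a used link; then $U^{UE}(G,r,\ell,f^{\star})\ge c/\ell^{\star}(x_0+\delta)\to\hat{\ell}^{\star}(x_0)/\ell^{\star}(x_0)$ as $\delta\to0$, so $U^{UE}(\mathcal{L})\ge\gamma(\mathcal{L})$.

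\textbf{Main obstacle.} The steepness sandwich and the Pigou computations are routine; the real difficulties are (i) the multi-commodity case of the $U^{UE}$ upper bound, where $C_i(f^{\star})\le C_i(f^{NE})$ may fail so the single-commodity argument does not transfer verbatim, and (ii) the bookkeeping of degenerate cases---$\gamma(\mathcal{L})\in\{1,\infty\}$, links of zero latency, and non-uniqueness of the Nash and optimal flows in the construction---so that the convention $0/0:=1$ and all the limiting arguments are airtight.
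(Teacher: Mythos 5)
Your proposal is correct and follows essentially the same route as the paper: the paper delegates $U^{L}(\mathcal{L})=\gamma(\mathcal{L})$ and $U^{UE}(\mathcal{L})=\gamma(\mathcal{L})$ to the cited results of \cite{doi:10.1287/opre.1070.0383} and \cite{10.5555/545381.545406} (whose standard arguments you reconstruct, including the same caveat about the multi-commodity UE case), and establishes the new equality $U^{A}(\mathcal{L})=\gamma(\mathcal{L})$ via \cref{lemma: 1} and \cref{lemma: lb}, whose Pigou-network lower-bound construction is exactly yours. Your derivation of the upper bound $U^{A}(\mathcal{L})\le\gamma(\mathcal{L})$ from the pointwise inequality $U^{A}\le U^{L}$ is precisely the alternative proof the paper records in the remark following \cref{lemma: 1}, so the differences are only expository.
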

\begin{proof}
The equalities $U^{L}(\mathcal{L}) = \gamma(\mathcal{L})$ and $U^{UE}(\mathcal{L}) = \gamma(\mathcal{L})$ are proved in  \cite[Theorem 4.2]{doi:10.1287/opre.1070.0383} and \cite[Theorem 3.1]{10.5555/545381.545406}, respectively. The equality for the average unfairness follows from \cref{lemma: 1} and \cref{lemma: lb} proved below.
\end{proof}

\begin{remark}
\cref{assump: 1} alone suffices for the three unfairness measures to be upper bounded by $\gamma(\mathcal{L})$, but both assumptions are required for the lower bound. 
\end{remark}

\begin{remark}
We highlight two points about the results in \cite[Theorem 4.2]{doi:10.1287/opre.1070.0383} and \cite[Theorem 3.1]{10.5555/545381.545406}. First, although the assumption of standard latency functions is not stated explicitly in \cite[Theorem 4.2]{doi:10.1287/opre.1070.0383}, it cannot be dispensed with. Second, while \cite[Theorem 3.1]{10.5555/545381.545406} only considers the single-commodity case, the result extends to multiple commodities.
\end{remark}

\begin{lemma}
\label{lemma: 1}
Let $\mathcal{L}$ be a family of latency functions satisfying \cref{assump: 1}. Then $U^{A}(\mathcal{L}) \leq \gamma(\mathcal{L})$.
\end{lemma}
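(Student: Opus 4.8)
The plan is to combine two facts from the preliminaries: (i) since $\mathcal L$ is standard, every optimal flow $f^\star$ is a Nash flow for the marginal-cost instance $(G,r,\hat\ell)$, so all used $s_i$-$t_i$ paths carry a common marginal latency; and (ii) differentiability together with monotonicity yields the pointwise sandwich $\ell_e(x)\le\hat\ell_e(x)\le\gamma(\mathcal L)\,\ell_e(x)$. After fixing an instance $(G,r,\ell)\in\mathcal G(\mathcal L)$, an optimal flow $f^\star$, a commodity $i$, and a path $Q\in\mathcal P_i$ with $f^\star_Q>0$, it suffices to show $C_i(f^\star)\le\gamma(\mathcal L)\,r_i\,\ell_Q(f^\star)$ and then take the relevant maxima and suprema.

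For ingredient (ii) I would write $\hat\ell_e(x)=\ell_e(x)+x\ell_e'(x)$ with $\ell_e'\ge0$ to get $\ell_e\le\hat\ell_e$, and invoke the definition of steepness for the upper bound (on a maximal interval where $\ell_e\equiv0$ one has $x\ell_e(x)\equiv0$, forcing $\hat\ell_e\equiv0$ there, so the upper bound still holds with both sides zero). Summing over the edges of a path gives $\ell_P(f^\star)\le\hat\ell_P(f^\star)\le\gamma(\mathcal L)\,\ell_P(f^\star)$ for every $P$. Writing $\mu_i:=\hat\ell_P(f^\star)$ for the common marginal latency of the used paths in $\mathcal P_i$, I bound the numerator via $C_i(f^\star)=\sum_{P\in\mathcal P_i}\ell_P(f^\star)f^\star_P\le\sum_{P\in\mathcal P_i}\hat\ell_P(f^\star)f^\star_P=\mu_i\sum_{P\in\mathcal P_i}f^\star_P=\mu_i r_i$ by feasibility, and the denominator via $\ell_Q(f^\star)\ge\hat\ell_Q(f^\star)/\gamma(\mathcal L)=\mu_i/\gamma(\mathcal L)$ since $Q$ is used. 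Dividing gives $C_i(f^\star)/\bigl(r_i\ell_Q(f^\star)\bigr)\le\gamma(\mathcal L)$; taking the maximum over $Q$ and $i$, then the supremum over $f^\star\in OPT(G,r,\ell)$ and over instances, yields $U^A(\mathcal L)\le\gamma(\mathcal L)$.

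I expect the only fiddly points to be degenerate zeros and well-definedness rather than anything deep. If $\ell_Q(f^\star)=0$, the chain above forces $\mu_i=0$ and hence $C_i(f^\star)=0$, so the ratio is $0/0:=1\le\gamma(\mathcal L)$; here I use $\gamma(\mathcal L)\ge1$, which holds because $\hat\ell\ge\ell$ pointwise and, by \cref{assump: 2}, $\mathcal L$ contains a nonzero constant. I would also note at the outset that $\hat\ell_e$ is well-defined under \cref{assump: 1} and that the equivalence ``optimal $\Leftrightarrow$ Nash for $\hat\ell$'' is precisely what the standard-function hypothesis buys us. With those remarks in place, the core estimate is just the two-line sandwich computation above, so no genuine obstacle is anticipated.
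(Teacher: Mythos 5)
Your proposal is correct and follows essentially the same route as the paper: both use the optimality condition for the marginal-cost latencies $\hat\ell$, the sandwich $\ell_e \le \hat\ell_e \le \gamma(\mathcal L)\ell_e$, and a flow-weighted sum over paths to bound $C_i(f^\star)$ by $\gamma(\mathcal L)\,r_i\,\ell_Q(f^\star)$. Your phrasing via the common marginal latency $\mu_i$ and your handling of the $0/0$ edge case are cosmetic refinements of the paper's argument, not a different method (though note the lemma as stated assumes only \cref{assump: 1}, so your appeal to \cref{assump: 2} for $\gamma(\mathcal L)\ge 1$ is technically outside its hypotheses; the degenerate case is harmless either way).
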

\begin{proof}
Fix any instance $(G, r, \ell)$ and an optimal flow $f^{\star}$. Consider any $i \in \{1, 2, ..., k\}$. Since $f^{\star}$ is optimal, we get $\hat{\ell}_P(f^{\star}) \leq \hat{\ell}_Q(f^{\star})$ for any $P, Q \in \mathcal{P}_i$ with $f^{\star}_P > 0$. Let $Q$ be a minimum-latency path with a positive flow for commodity $i$, multiply both sides by $f^{\star}_P$, and sum over all paths:
\begin{equation}
\label{eq: opt}
\sum_{P\in \mathcal{P}_i} f^{\star}_P \hat{\ell}_P(f^{\star}) \leq (\sum_{P \in \mathcal{P}_i}f^{\star}_P)\cdot \hat{\ell}_Q(f^{\star}) = r_i \hat{\ell}_Q(f^{\star}).
\end{equation}

Combining the results, we get 

\begin{equation}
\label{eq: upper}
\begin{aligned}
C_i(f^{\star}) \leq \text{LHS of (\ref{eq: opt})} \leq \text{RHS of (\ref{eq: opt})} \leq \gamma(\mathcal{L})\cdot r_i \ell_Q(f^{\star}),
\end{aligned}
\end{equation}
where we use $\hat{\ell}_e(f^{\star}_e) = \ell_e(f^{\star}_e) + f^{\star}_e \ell_e'(f^{\star}_e) \geq \ell_e(f^{\star}_e)$ in the first inequality and $\hat{\ell}_Q(f^{\star}) = \sum_{e \in Q} \hat{\ell}_e(f_e^{\star}) \leq \gamma(\mathcal{L})\sum_{e \in Q} \ell_e(f_e^{\star}) = \gamma(\mathcal{L})\ell_Q(f^{\star})$ in the third inequality. Since (\ref{eq: upper}) holds for any commodity $i$, $U^{A}(\mathcal{L}) \leq \gamma(\mathcal{L})$.
\end{proof}

\begin{remark}
Another way to prove Lemma \ref{lemma: 1} is to invoke the inequality $U^L(G, r, \ell, f) \leq \gamma(\mathcal{L})$ in \cite[Theorem 4.2]{doi:10.1287/opre.1070.0383} 
and the inequality $U^A(G, r, \ell, f) \leq U^L(G, r, \ell, f)$ in \cref{prop: UDJ and U}, which will be proved later without using Lemma \ref{lemma: 1}.
\end{remark}

In proving the lower bound, we make use of the Pigou network in \cref{fig:pigou}.
Before presenting the general lower bound in \cref{lemma: lb}, we provide a motivating example of the average unfairness of the optimal flow when latency functions are polynomials.

\begin{figure}
    \centering
    \begin{tikzpicture}[->, thick, node distance=2.5cm]
        \node[circle, draw, minimum size=0.5cm, inner sep=1pt] (A) {$s$};
        \node[circle, draw, minimum size=0.5cm, inner sep=1pt, right=of A] (B) {$t$};
        
        \draw (A) to[bend left=30] node[midway, above] {$\ell(x)$} (B);
        \draw (A) to[bend right=30] node[midway, below] {$c$} (B);
    \end{tikzpicture}
    \caption{Pigou network.}
    \label{fig:pigou}
\end{figure}
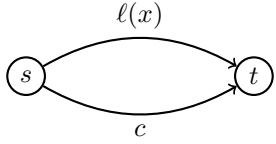

\begin{example}
\label{ex: lb}
Construct $(G, r, \ell)$ as follows: consider one unit of traffic ($r = 1$) traveling through the Pigou network in \cref{fig:pigou}, with $\ell(x) = x^{n}$ and $c = \epsilon$ where $\epsilon, n > 0$. Solving $(x^{n+1} + (1-x)\epsilon)' = 0$, we get $x = (\frac{\epsilon}{n+1})^{\frac{1}{n}}$, so 
\begin{equation}
U^A(G, r, \ell) \geq \frac{\frac{\epsilon}{n+1}(\frac{\epsilon}{n+1})^{\frac{1}{n}} + \epsilon(1 - (\frac{\epsilon}{n+1})^{\frac{1}{n}})}{\frac{\epsilon}{n+1}} \xrightarrow[\epsilon \rightarrow 0]{} n+1.
\end{equation}
\end{example}

\begin{lemma}\label{lemma: lb}
Let $\mathcal{L}$ be a family of latency functions satisfying \cref{assump: 1} and \cref{assump: 2}. Then $U^{A}(\mathcal{L}) \geq \gamma(\mathcal{L})$.
\end{lemma}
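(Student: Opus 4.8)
The plan is to mimic the lower-bound construction used for loaded and UE unfairness, but verify it also works for the average measure. The key observation is that, in the Pigou network of \cref{fig:pigou}, the optimal flow uses \emph{both} parallel edges (for suitable rate and constant $c$), so the ``minimum-latency path with positive flow'' in the definition of $U^A$ is exactly the lower edge, and the average unfairness equals $C(f^\star)/(r\cdot c)$; this is precisely the same quantity that drives the loaded and UE bounds in \cite{doi:10.1287/opre.1070.0383,10.5555/545381.545406}. So the numerical content of the lower bound is inherited, and what remains is to show the construction realizes $\gamma(\mathcal{L})$ in the limit.

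Concretely, I would proceed as follows. First, fix $\ell \in \mathcal{L}$ and pick $x_0 > 0$ with $\hat\ell(x_0)/\ell(x_0)$ arbitrarily close to $\gamma(\ell)$ (this requires $\ell(x_0) > 0$; the degenerate case where $\ell$ vanishes on a neighborhood needs a separate, easy argument). Second, build the Pigou instance with upper-edge latency $\ell$, lower-edge constant latency $c$ chosen so that the optimal split puts exactly $x_0$ units on the upper edge: by the marginal-cost characterization in \cref{sec: Nash and Optimal Flows}, the optimal flow equalizes marginal costs, so one wants $\hat\ell(x_0) = c$, and I would set the total rate $r$ slightly above $x_0$ so that the lower edge carries positive flow too — using \cref{assump: 2} to ensure the constant function $c$ lies in $\mathcal{L}$. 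Third, compute $U^A$ of this optimal flow: the minimum used-path latency is $c = \hat\ell(x_0)$, and
\begin{equation}
U^A(G,r,\ell) \geq \frac{x_0\,\ell(x_0) + (r-x_0)\,c}{r\,c}.
\end{equation}
Fourth, take $r \downarrow x_0$ (or equivalently let the lower-edge flow shrink to zero), which sends the right-hand side to $x_0\ell(x_0)/(x_0 c) = \ell(x_0)/\hat\ell(x_0)$\,---\,wait, that is the reciprocal, so instead one must take $r$ \emph{large} relative to $x_0$, or rescale; more carefully, the construction in \cite{doi:10.1287/opre.1070.0383} uses $r \to \infty$ (or a rescaling of $\ell$) so that the $(r-x_0)c$ term dominates and the ratio tends to $\hat\ell(x_0)/\ell(x_0)$; I would follow that scaling. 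Finally, letting $x_0$ optimize and then supremizing over $\ell \in \mathcal{L}$ gives $U^A(\mathcal{L}) \geq \gamma(\mathcal{L})$. \cref{ex: lb} is the prototype of exactly this computation with $\ell(x) = x^n$.

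The main obstacle I anticipate is handling the order of limits and the scaling cleanly: one must pick the rate $r$ and the constant $c$ (both as functions of the chosen $x_0$ and of how close one wants to be to $\gamma(\ell)$) so that (i) the claimed flow is genuinely optimal — i.e. the marginal costs match and the lower edge carries strictly positive flow, so that $U^A$ is evaluated with $\ell_Q(f^\star) = c$ — and (ii) the resulting ratio converges to $\hat\ell(x_0)/\ell(x_0)$. A secondary subtlety is that $\gamma(\ell)$ is defined as a supremum that may not be attained, so the argument must be phrased with an $\varepsilon$ of room at each stage and the suprema interchanged at the end; and one should note that $\gamma(\mathcal{L}) \geq 1$ always (since $\hat\ell \geq \ell$ for nondecreasing $\ell$), so the trivial case $\gamma(\mathcal{L}) = 1$ needs no work. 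I would also remark that this matches the bound for $U^L$ and $U^{UE}$ but, as the paper emphasizes, is attained at a \emph{different} family of instances — here the extremal instances are Pigou networks, and the point of \cref{ex: lb} is to exhibit them explicitly.
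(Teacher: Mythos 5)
Your overall strategy is the paper's: a Pigou instance with upper-edge latency $\ell$, a constant lower edge equal to $\hat\ell$ evaluated at a near-extremal point $x_0$ (using \cref{assump: 2}), the marginal-cost characterization to pin the optimal split at $x_0$, the rate $r\to\infty$, and an $\varepsilon$-of-room argument since $\gamma$ is a supremum. But there is a genuine error at the heart of your computation: you identify the minimum-latency used path as the \emph{lower} (constant) edge, whereas it is the \emph{upper} edge. Since $\ell$ is nondecreasing, $c=\hat\ell(x_0)=\ell(x_0)+x_0\ell'(x_0)\ge\ell(x_0)$, so the cheapest used path has latency $\ell(x_0)$, not $c$. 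With your denominator $r\,c$, the displayed ratio
\begin{equation}
\frac{x_0\,\ell(x_0) + (r-x_0)\,c}{r\,c}
\end{equation}
tends to $1$ as $r\to\infty$ (the dominant terms in numerator and denominator are both $rc$), so the limit you assert, $\hat\ell(x_0)/\ell(x_0)$, does not follow; your parenthetical ``wait, that is the reciprocal'' is a symptom of this misidentification rather than something fixable by rescaling $r$. The correct denominator is $r\,\ell(x_0)$, and then the $r\to\infty$ limit of $(r-x_0)c/(r\,\ell(x_0))$ is exactly $\hat\ell(x_0)/\ell(x_0)\ge\gamma(\mathcal{L})-\delta$, which is how the paper's proof concludes. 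Once that single identification is corrected, the rest of your plan (choice of $x_0$, optimality via equal marginal costs, order of limits, supremizing over $\ell$) goes through and coincides with the paper's argument.
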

\begin{proof}
Consider $k = 1$. For any $\delta \in (0, \gamma(\mathcal{L}))$, there are $\ell \in \mathcal{L}$ and $\epsilon > 0$ such that $\hat{\ell}(\epsilon) \geq (\gamma(\mathcal{L})-\delta) \ell(\epsilon)$.

Consider $r$ units of traffic traveling through the Pigou network with latency functions $\hat{\ell}(\epsilon)$ and $\ell(x)$, respectively, and $r > \epsilon$. Solving $(x\ell(x) + \hat{\ell}(\epsilon)(r-x))' = 0$, we get $\hat{\ell}(x) = \hat{\ell}(\epsilon)$, so we assume $x = \epsilon$ below.

{\begin{equation}
U^A(\mathcal{L}) \geq \lim_{r \rightarrow \infty} \frac{\epsilon \ell(\epsilon) + \hat{\ell}(\epsilon)(r-\epsilon)}{r\ell(\epsilon)} = \lim_{r \rightarrow \infty} \frac{\epsilon}{r} + \frac{\hat{\ell}(\epsilon)(r-\epsilon)}{r\ell(\epsilon)}.
\end{equation}}

The first term goes to 0, so we focus on the second term.
\begin{equation}
\frac{\hat{\ell}(\epsilon)(r-\epsilon)}{r\ell(\epsilon)} \geq \frac{\hat{\ell}(\epsilon)(r-\epsilon)}{r\frac{\hat{\ell}(\epsilon)}{\gamma(\mathcal{L})-\delta}} = (1- \frac{\epsilon}{r})(\gamma(\mathcal{L}) - \delta) \xrightarrow[r \rightarrow \infty]{} \gamma(\mathcal{L}) - \delta.
\end{equation}

Letting $\delta \rightarrow 0$ finishes the proof.
\end{proof}

\begin{corollary}
If $\mathcal{L}$ is the set of polynomials with degree at most $n$ and nonnegative coefficients, $U^A(\mathcal{L}) = n+1$.
\end{corollary}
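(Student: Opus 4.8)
The plan is to apply \cref{thm: 2} with $\mathcal{L}$ taken to be the class of polynomials of degree at most $n$ with nonnegative coefficients, so that it suffices to compute $\gamma(\mathcal{L})$ for this class. First I would verify the hypotheses: every such polynomial $\ell(x) = \sum_{j=0}^{n} a_j x^j$ with $a_j \geq 0$ is differentiable, nonnegative, continuous, and nondecreasing on $[0,\infty)$; moreover $x \cdot \ell(x) = \sum_{j=0}^{n} a_j x^{j+1}$ is a sum of convex functions on $[0,\infty)$, hence convex, so $\ell$ is standard and \cref{assump: 1} holds. Since constant functions are polynomials of degree $0$ with a nonnegative coefficient, \cref{assump: 2} holds as well. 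Thus \cref{thm: 2} gives $U^A(\mathcal{L}) = \gamma(\mathcal{L})$, and the whole problem reduces to showing $\gamma(\mathcal{L}) = n+1$.

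Next I would compute the steepness. For $\ell(x) = \sum_{j=0}^{n} a_j x^j$, the marginal cost is $\hat{\ell}(x) = \frac{d}{dx}(x\ell(x)) = \sum_{j=0}^{n} (j+1) a_j x^j$, so pointwise $\hat{\ell}(x) = \sum_j (j+1) a_j x^j \leq (n+1)\sum_j a_j x^j = (n+1)\ell(x)$ whenever $\ell(x) > 0$, giving $\gamma(\ell) \leq n+1$ for every $\ell \in \mathcal{L}$ and hence $\gamma(\mathcal{L}) \leq n+1$. For the reverse inequality I would exhibit a single function attaining (or approaching) the bound: taking $\ell(x) = x^n$ gives $\hat{\ell}(x) = (n+1)x^n$, so $\hat{\ell}(x)/\ell(x) = n+1$ for all $x > 0$, whence $\gamma(x^n) = n+1$ and therefore $\gamma(\mathcal{L}) \geq n+1$. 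Combining the two bounds yields $\gamma(\mathcal{L}) = n+1$, and the corollary follows. (One could alternatively read the lower bound off \cref{ex: lb}, which already handles $\ell(x) = x^n$ directly.)

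There is essentially no obstacle here — the statement is a direct specialization of \cref{thm: 2} together with an elementary steepness computation. The only point requiring a moment's care is the edge case in the ratio $\hat{\ell}(x)/\ell(x)$ when $\ell(x) = 0$: for a polynomial with nonnegative coefficients, $\ell(x) = 0$ for some $x > 0$ forces all $a_j = 0$, i.e. $\ell \equiv 0$, in which case $\hat{\ell} \equiv 0$ and the convention $0/0 := 1 \leq n+1$ keeps the supremum bound valid; otherwise $\ell(x) > 0$ for all $x > 0$ and the pointwise inequality above applies cleanly. I would state this briefly and then conclude.
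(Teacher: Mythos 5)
Your proposal is correct and follows the same route as the paper: invoke \cref{thm: 2} and compute the steepness of the class, with $\ell(x)=x^n$ witnessing $\gamma(\mathcal{L})\geq n+1$. You are simply more explicit than the paper's one-line proof, which omits the verification of \cref{assump: 1} and \cref{assump: 2} and the upper bound $\gamma(\mathcal{L})\leq n+1$; your added details are all correct.
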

\begin{proof}
Since $(x \cdot x^n)' = (n+1)x^n$, the result follows from Theorem \ref{thm: 2}.
\end{proof}

\subsection{Comparisons}
\label{sec: comparisons}
It is informative to compare the values that the three unfairness measures take at the same flow. For the loaded and average unfairness measures, the following proposition characterizes this comparison.  
\begin{proposition}
\label{prop: UDJ and U}
Given any instance $(G, r, \ell)$ and feasible flow $f$, one of the following holds.
\begin{enumerate}
    \item $U^{L}(G, r, \ell, f) > U^A(G, r, \ell, f) > 1$.
    \item $U^{L}(G, r, \ell, f) = U^A(G, r, \ell, f) = 1$.
\end{enumerate}
\end{proposition}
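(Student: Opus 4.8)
The plan is to prove the dichotomy one commodity at a time and then assemble the two network-wide measures, each of which is a maximum over commodities. Fix a feasible flow $f$; since $r_i > 0$, every commodity uses at least one path, so for each $i$ put $m_i := \min\{\ell_Q(f) : Q \in \mathcal{P}_i,\, f_Q > 0\}$ and $M_i := \max\{\ell_P(f) : P \in \mathcal{P}_i,\, f_P > 0\}$, so that $U^L_i(G,r,\ell,f) = M_i/m_i$. Since $Q \mapsto 1/\ell_Q(f)$ is largest at a minimum-latency used path, $U^A_i(G,r,\ell,f) = C_i(f)/(r_i m_i)$. Assuming for the moment that $m_i > 0$ (so all used paths of commodity $i$ have positive latency), I would rewrite
\[
U^A_i(G,r,\ell,f) \;=\; \frac{\sum_{P \in \mathcal{P}_i} \ell_P(f) f_P}{m_i \sum_{P \in \mathcal{P}_i} f_P} \;=\; \sum_{P \in \mathcal{P}_i:\, f_P > 0} \frac{f_P}{r_i}\cdot\frac{\ell_P(f)}{m_i},
\]
which exhibits $U^A_i$ as a convex combination of the numbers $\ell_P(f)/m_i$ over used paths $P$. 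Each such number lies in $[1,\, M_i/m_i] = [1,\, U^L_i]$, so $1 \le U^A_i \le U^L_i$ is immediate.

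Next I would examine the two boundary cases of this convex combination. It equals its minimum value $1$ exactly when every term with positive weight equals $1$, i.e.\ $\ell_P(f) = m_i$ for all used $P$; this is equivalent to $M_i = m_i$, i.e.\ to $U^L_i = 1$, and conversely $U^L_i = 1$ forces $M_i = m_i$ and hence $U^A_i = 1$. By the same reasoning, the combination attains its maximum value $M_i/m_i$ only if $\ell_P(f) = M_i$ for every used $P$, which again forces $m_i = M_i$; hence $U^A_i = U^L_i$ is impossible once $U^L_i > 1$. Combining, for each commodity $i$ exactly one of the following holds: $U^L_i = U^A_i = 1$, or $1 < U^A_i < U^L_i$. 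The only situation left out is $m_i = 0$: then either all used paths of commodity $i$ have zero latency, so $U^L_i = U^A_i = 1$ by the $0/0 := 1$ convention, or some used path has positive latency, which is the degenerate case one excludes (or reads off through the convention).

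It remains to pass to $U^L = \max_i U^L_i$ and $U^A = \max_i U^A_i$. If $U^L_i = 1$ for every $i$, then $U^A_i = 1$ for every $i$, so $U^L = U^A = 1$ and alternative (2) holds. Otherwise some commodity $i_0$ has $U^L_{i_0} > 1$, hence $U^L \ge U^L_{i_0} > 1$, and the per-commodity dichotomy gives $U^A_{i_0} > 1$, so $U^A \ge U^A_{i_0} > 1$. That dichotomy also supplies the implication $U^A_i > 1 \Rightarrow U^L_i > 1$; thus, letting $j$ attain $U^A = U^A_j$, we have $U^A_j \ge U^A_{i_0} > 1$, so $U^L_j > 1$, so $U^A_j < U^L_j$, and therefore $U^A = U^A_j < U^L_j \le U^L$. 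Together with $U^A > 1$ this is alternative (1).

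The argument is elementary, so the point most in need of care is the bookkeeping in this last step: the maxima defining $U^L$ and $U^A$ need not be attained at the same commodity, so one cannot argue commodity-by-commodity and must route the comparison through the implication $U^A_i > 1 \Rightarrow U^L_i > 1$. The only other delicate point is the zero-latency degeneracy, which is dispatched by the stated convention.
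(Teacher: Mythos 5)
Your proof is correct and follows essentially the same route as the paper's: both rest on the elementary sandwich $r_i \min_P \ell_P(f) \le C_i(f) \le r_i \max_P \ell_P(f)$ over used paths (your convex-combination framing is just a repackaging of this) together with the observation that either equality forces all used latencies of that commodity to coincide. If anything, your passage from the per-commodity dichotomy to the network-wide maxima—routing through the implication $U^A_i > 1 \Rightarrow U^L_i > 1$ and the commodity attaining $U^A$—is spelled out more carefully than the paper's equivalence chain, which compresses the same bookkeeping into its first ``$\iff$''.
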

\begin{proof}
Fix an instance $(G, r, \ell)$ and drop the dependence. For any feasible flow $f$ and commodity $i$, since $r_i\max_{P\in \mathcal{P}_i, f_P > 0} \ell_P(f) \geq \sum_{P\in \mathcal{P}_i} f_P\ell_P(f)$, we have $U^{L}_i \geq U^A_i$, so $U^L \geq U^A$. Below, we show $U^{L} = U^A \iff U^A = 1$. 
\begin{equation}
\begin{aligned}
U^{L} = U^A &\iff \exists i^{\star} \in \{1, 2, ..., k\} \text{ s.t. } U^L_{i^{\star}} = U^A_{i^{\star}} \geq U^A\\
&\iff r_{i^{\star}}\max_{P\in \mathcal{P}_{i^{\star}}, f_P > 0} \ell_P(f) = \sum_{P\in \mathcal{P}_{i^{\star}}} f_P \ell_P(f)\\
&\iff \max_{P\in \mathcal{P}_{i^{\star}}, f_P > 0} \ell_P(f) = \min_{P\in \mathcal{P}_{i^{\star}}, f_P > 0} \ell_P(f) \\
&\iff U^A_{i^{\star}} = U^A = 1.
\end{aligned}
\end{equation}
\end{proof}
%

A result similar to \cref{prop: UDJ and U} cannot be established between the UE unfairness and the loaded unfairness as there is no general relationship between the two measures (see \cref{ex:UL>UE=1} and \cref{ex:UL<UE}).
It is also possible that at a given flow, one of the measures equals to $1$ while the other is not (see \cref{ex:UL>UE=1} and \cref{ex:UL=1>UE}).
Moreover, as noted in \cite[Section 2.2]{doi:10.1287/opre.1040.0197}, the loaded unfairness is always greater than or equal to one, whereas UE unfairness can take any nonnegative value (see \cref{ex:UL=1>UE}).

\begin{example} 
Construct $(G, r, \ell)$ as follows: consider one unit of traffic ($r = 1$) traveling through the network in \cref{fig:pigou-series}. Consider the (optimal) flow $f^{\star}$ that routes half of the traffic on the upper path (path 1)  and the other half on the lower path (path 2). Then $U^{L}(G, r, \ell, f^{\star}) = 2 > U^{UE}(G, r, \ell, f^{\star}) = 1$.
\label{ex:UL>UE=1}
\end{example}

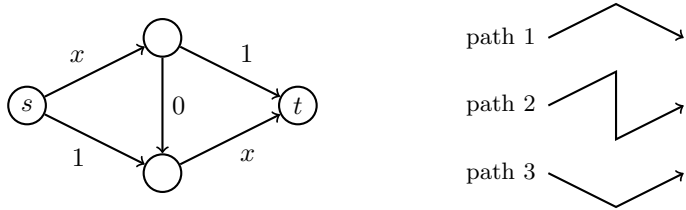
\begin{figure}[h!]
    \centering
    \begin{tikzpicture}[scale=0.9, ->, thick, node distance=2.5cm]
        \node[circle, minimum size=0.5cm, inner sep=1pt, draw] (S) at (0,1) {$s$};
        \node[circle, minimum size=0.5cm, inner sep=1pt, draw] (T) at (4,1) {$t$};
        \node[circle, minimum size=0.5cm, inner sep=1pt, draw] (V) at (2,2) {};
        \node[circle, minimum size=0.5cm, inner sep=1pt, draw] (W) at (2,0) {};
        
        \draw (S) -- node[midway, above left] {$x$} (V);
        \draw (S) -- node[midway, below left] {$1$} (W);
        \draw (V) -- node[midway, above right] {$1$} (T);
        \draw (W) -- node[midway, below right] {$x$} (T);
        \draw (V) -- node[midway, right] {$0$} (W);

        \begin{scope}[xshift=7cm]
        \node[black] at (0,2) {\small path 1};
        \coordinate (start1) at (0.7, 2);
        \coordinate (mid1) at (1.7, 2.5);
        \coordinate (end1) at (2.7,2);
        \draw[->, black] 
        (start1) to  (mid1) to  (end1);
        \node[black] at (0,1) {\small path 2};
        \coordinate (start2) at (0.7,1);
        \coordinate (mid2) at (1.7,1.5);
        \coordinate (mid22) at (1.7,0.5);
        \coordinate (end2) at (2.7,1);
        \draw[->, black] 
        (start2) to  (mid2) to  (mid22) to  (end2);
        \node[black] at (0,0) {\small path 3};
        \coordinate (start3) at (0.7,0);
        \coordinate (mid3) at (1.7,-0.5);
        \coordinate (end3) at (2.7,0);
        \draw[->, black] 
        (start3) to  (mid3) to  (end3);
        \end{scope}
    \end{tikzpicture}
    \caption{Braess’s network.}
    \label{fig:Braess}
\end{figure}

\begin{example} 
Construct $(G, r, \ell)$ as follows: consider one unit of traffic ($r = 1$) traveling through the Braess’s network in \cref{fig:Braess}. Consider the (optimal) flow $f^{\star}$ that routes half of the traffic on the upper path (path 1)  and the other half on the lower path (path 3). Then $U^{L}(G, r, \ell, f^{\star}) = 1 > U^{UE}(G, r, \ell, f^{\star}) = \frac{3}{4}$.
\label{ex:UL=1>UE}
\end{example}

\begin{example}
Construct $(G, r, \ell)$ as follows: consider the multi-commodity network in \cref{fig:multi-network}. Consider the (optimal) flow $f^{\star}$ that routes all the traffic of the first commodity along the path $s_1-t$ and all the traffic of the second commodity along the only path $s_2-t$. Note that the Nash flow routes all the traffic of the first commodity through the path $s_1-s_2-t$. Thus, $U^{UE}(G, r, \ell, f^{\star}) = 3/2 > U^{L}(G, r, \ell, f^{\star}) = 1$.
\label{ex:UL<UE}
\end{example}
\begin{figure}[h!]
    \centering
    \begin{tikzpicture}[scale=0.80, ->, thick, node distance=2.5cm]
        \node[circle, minimum size=0.5cm, inner sep=1pt, draw] (S1) at (0,2) {$s_1$};
        \node[circle, minimum size=0.5cm, inner sep=1pt, draw] (S2) at (2,0) {$s_2$};
        \node[circle, minimum size=0.5cm, inner sep=1pt, draw] (T)  at (4,2) {$t$};
        \draw[->] (S1) -- node[midway, above] {$1.5$} (T) ;
        \draw[->] (S2) -- node[midway, below right] {$x$} (T);
        \draw[->] (S1) -- node[midway, below left] {$0$} (S2);
        \draw[->] (-1.5,2) -- (S1) node[pos=0, left] {$r_1 = 0.25$};
        \draw[->] (0.5,0) -- (S2) node[pos=0, left] {$r_2 = 0.75$};        
    \end{tikzpicture}
    \caption{Multi-commodity network.}
    \label{fig:multi-network}
\end{figure}
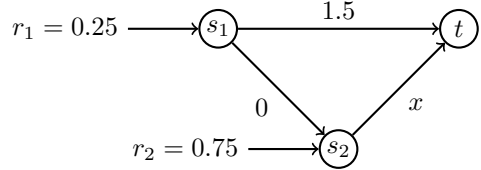

However, if we restrict our attention to optimal flows in single-commodity instances, we obtain a relationship similar to that in \cref{prop: UDJ and U}. 
Although both unfairness measures have been repeatedly studied in the literature for the past decade, to the best of our knowledge, this kind of relationship has never been theoretically established.
\begin{proposition}
\label{prop: UDJ and UTR}
Given any single-commodity ($k=1$) instance $(G, r, \ell)$ and optimal flow $f^{\star}$, one of the following holds.
\begin{enumerate}
    \item $U^{L}(G, r, \ell, f^{\star}) > U^{UE}(G, r, \ell, f^{\star})$.
    \item $U^{L}(G, r, \ell, f^{\star}) = U^{UE}(G, r, \ell, f^{\star}) = 1$.
\end{enumerate}
\end{proposition}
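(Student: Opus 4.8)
The plan is to collapse the whole dichotomy into a single scalar inequality between two latencies and then obtain that inequality from the optimality of $f^{\star}$. Since $k=1$, drop the commodity index. Let $f^{NE}$ be a Nash flow; because in a Nash flow all $s$–$t$ paths carrying positive flow share a common latency, write $L^{NE}$ for that value. Then every admissible denominator in $U^{UE}(G,r,\ell,f^{\star})$ equals $L^{NE}$, so $U^{UE}(G,r,\ell,f^{\star}) = L^{\star}_{\max}/L^{NE}$, where $L^{\star}_{\max} = \max\{\ell_P(f^{\star}) : f^{\star}_P>0\}$; similarly $U^{L}(G,r,\ell,f^{\star}) = L^{\star}_{\max}/L^{\star}_{\min}$, where $L^{\star}_{\min} = \min\{\ell_Q(f^{\star}) : f^{\star}_Q>0\}$, and the convention $0/0:=1$ handles the degenerate case $L^{\star}_{\max}=0$. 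With these identifications, it suffices to prove two things: (a) $L^{\star}_{\min}\le L^{NE}$, and (b) if $L^{\star}_{\min}=L^{NE}$ then in fact $L^{\star}_{\max}=L^{\star}_{\min}$.

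For (a) I would chain three elementary facts. First, feasibility gives $C(f^{\star}) = \sum_{P:\,f^{\star}_P>0}\ell_P(f^{\star})f^{\star}_P \ge L^{\star}_{\min}\sum_{P:\,f^{\star}_P>0}f^{\star}_P = r\,L^{\star}_{\min}$. Second, since $f^{\star}$ is optimal and $f^{NE}$ is a feasible competitor, $C(f^{\star})\le C(f^{NE})$. Third, because used Nash paths share latency $L^{NE}$, $C(f^{NE}) = L^{NE}\sum_P f^{NE}_P = r\,L^{NE}$. Combining, $r\,L^{\star}_{\min}\le r\,L^{NE}$, hence $L^{\star}_{\min}\le L^{NE}$. (Note this uses only continuity and monotonicity of the latencies, not standardness or marginal costs.)

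For (b), if $L^{\star}_{\min}=L^{NE}$ then the first inequality above must be an equality, i.e. $\sum_{P:\,f^{\star}_P>0}\bigl(\ell_P(f^{\star})-L^{\star}_{\min}\bigr)f^{\star}_P = 0$; every summand is nonnegative, so $\ell_P(f^{\star})=L^{\star}_{\min}$ on every used path, whence $L^{\star}_{\max}=L^{\star}_{\min}$ and both measures equal $1$ (this also subsumes the case $L^{NE}=0$, which forces $C(f^{\star})=0$ and hence $L^{\star}_{\max}=0$, again value $1$ by convention). Conversely, if $L^{\star}_{\min}<L^{NE}$, then $U^{L}(G,r,\ell,f^{\star})=L^{\star}_{\max}/L^{\star}_{\min}>L^{\star}_{\max}/L^{NE}=U^{UE}(G,r,\ell,f^{\star})$ when $L^{\star}_{\max}>0$, and $U^{L}=1>0=U^{UE}$ when $L^{\star}_{\max}=0$; in either subcase alternative~1 holds. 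Since by (a) the case $L^{\star}_{\min}>L^{NE}$ cannot occur, this exhausts all possibilities. The one place where care is genuinely needed — and, I expect, the reason the statement fails outside the optimal/single-commodity setting — is the step $C(f^{\star})\le C(f^{NE})$: without optimality, $C(f)$ can far exceed $r\,L^{NE}$ and $L_{\min}$ can exceed $L^{NE}$ (e.g. routing all mass onto an expensive constant link of a Pigou network), so the argument is tight in its hypotheses.
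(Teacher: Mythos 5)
Your proof is correct and follows essentially the same route as the paper's: both reduce the dichotomy to comparing $L^{\star}_{\min}$ with the common Nash latency via the optimality inequality $C(f^{\star})\le C(f^{NE})=r\,L^{NE}$, and both resolve the equality case by observing that $L^{\star}_{\min}=L^{NE}$ forces every used path of $f^{\star}$ to attain the minimum latency (else $f^{\star}$ would not be optimal). Your handling of the degenerate $0/0$ cases is somewhat more explicit than the paper's, but the argument is the same.
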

\begin{proof}
Fix an instance $(G, r, \ell)$ and drop the dependence. For any optimal flow $f^{\star}$ and Nash flow $f^{NE}$, we have the relationship $\min_{P \in \mathcal{P}_i, f_P^{\star} > 0} \ell_P(f^{\star}) \leq \max_{P \in \mathcal{P}_i, f^{NE}_P > 0} \ell_P(f^{NE})$; for otherwise, $f^{NE}$ gives a lower total cost, a contradiction. Thus, $U^{L} \geq U^{UE}$.

Below, we show $U^{L} = U^{UE} \iff (U^{UE} = 1 \text{ and } U^L = 1)$. We only need to show the only if direction. 
\begin{equation}
\begin{aligned}
U^{L} = U^{UE} &\iff \min_{P \in \mathcal{P}_i, f_P^{\star} > 0} \ell_P(f^{\star}) = \max_{P \in \mathcal{P}_i, f^{NE}_P > 0} \ell_P(f^{NE})\\
&\Rightarrow  \max_{P \in \mathcal{P}_i, f_P^{\star} > 0} \ell_P(f^{\star}) = \min_{P \in \mathcal{P}_i, f_P^{\star} > 0} \ell_P(f^{\star}) \\
&\qquad\qquad\qquad\qquad\ \ \ = \max_{P \in \mathcal{P}_i, f^{NE}_P > 0} \ell_P(f^{NE})\\
&\Rightarrow  U^{UE} = 1 \text{ and } U^L = 1,
\end{aligned}
\end{equation}
where the second implication holds by contradiction; if it does not hold, $f^{\star}$ cannot be optimal.
\end{proof}

\begin{remark}
\cref{prop: UDJ and UTR} does not hold for multiple commodities (see \cref{ex:UL<UE}). 
However, it holds for another version of the definitions when we are allowed to compare the latencies of different commodities. Formally speaking, if we replace the condition $P, Q \in \mathcal{P}_i$ in (\ref{eq: L def}) and (\ref{eq: UE def}) with $P, Q \in \mathcal{P}$, \cref{prop: UDJ and UTR} holds for the modified definitions with multiple commodities.
\end{remark}




The comparisons also give rise to the following result, exemplified by Example \ref{ex: lb} and the example in \cite[Section 1]{10.5555/545381.545406}, where the upper bound for average unfairness and UE unfairness are approached, but not attained, when the respectively defined problem parameters $\epsilon$ goes to zero.

\begin{corollary}
\label{cor: no instance}
Let $\mathcal{L}$ be a family of latency functions.
\begin{enumerate}
    \item If $U^A(\mathcal{L}) > 1$, then $U^A(G,r,\ell, f^\star) < U^A(\mathcal{L})$ for any $(G,r,\ell) \in \mathcal{G}(\mathcal{L})$ and $f^\star \in OPT(G, r, \ell)$.
    \item If $k=1$ and $U^{UE}(\mathcal{L}) > 1$, then $U^{UE}(G,r,\ell, f^\star) < U^{UE}(\mathcal{L})$ for any $(G,r,\ell) \in \mathcal{G}(\mathcal{L})$ and $f^\star \in OPT(G, r, \ell)$.
\end{enumerate}
\end{corollary}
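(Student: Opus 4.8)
The plan is to derive both statements from the strict comparison inequalities in \cref{prop: UDJ and U} and \cref{prop: UDJ and UTR}, combined with the identities $U^L(\mathcal{L}) = U^{UE}(\mathcal{L}) = U^A(\mathcal{L}) = \gamma(\mathcal{L})$ from \cref{thm: 2}. The underlying idea is that the supremum defining $U^A(\mathcal{L})$ (respectively $U^{UE}(\mathcal{L})$) is never attained by a concrete instance and optimal flow: whenever an optimal flow has average (respectively UE) unfairness strictly above $1$, its loaded unfairness is \emph{strictly} larger, yet the loaded unfairness of any optimal flow in any instance of $\mathcal{G}(\mathcal{L})$ is at most $U^L(\mathcal{L}) = \gamma(\mathcal{L})$.

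For part 1, I would fix $(G,r,\ell) \in \mathcal{G}(\mathcal{L})$ and $f^\star \in OPT(G,r,\ell)$ and split on the value of $U^A(G,r,\ell,f^\star)$. If it equals $1$, then since $U^A(\mathcal{L}) > 1$ by hypothesis, it is already strictly below $U^A(\mathcal{L})$. Otherwise it exceeds $1$, so the first alternative of \cref{prop: UDJ and U} yields $U^A(G,r,\ell,f^\star) < U^L(G,r,\ell,f^\star)$. Then $U^L(G,r,\ell,f^\star) \le U^L(G,r,\ell) \le U^L(\mathcal{L}) = U^A(\mathcal{L})$ by the definitions of the instance- and class-level loaded unfairness and by \cref{thm: 2}, which gives $U^A(G,r,\ell,f^\star) < U^A(\mathcal{L})$.

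For part 2, the argument is structurally the same in the single-commodity setting. Fix a single-commodity instance and $f^\star \in OPT(G,r,\ell)$. If $U^{UE}(G,r,\ell,f^\star) \le 1$, then it lies strictly below $U^{UE}(\mathcal{L}) > 1$. Otherwise it exceeds $1$, which rules out the second alternative of \cref{prop: UDJ and UTR}; hence the first alternative applies, giving $U^{UE}(G,r,\ell,f^\star) < U^L(G,r,\ell,f^\star) \le U^L(\mathcal{L}) = U^{UE}(\mathcal{L})$, again invoking \cref{thm: 2}.

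I do not expect a substantive obstacle, since this is a direct corollary of the earlier results; the only point needing care is that \cref{prop: UDJ and U} and \cref{prop: UDJ and UTR} deliver a \emph{strict} inequality exactly when the relevant unfairness is greater than $1$ (and, for part 2, exactly in the single-commodity case, since \cref{prop: UDJ and UTR} fails for multiple commodities --- cf. \cref{ex:UL<UE}). Treating the degenerate case ``unfairness $=1$'' separately, as above, is what makes the hypotheses $U^A(\mathcal{L}) > 1$ and $U^{UE}(\mathcal{L}) > 1$ do their job.
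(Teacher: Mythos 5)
Your proposal is correct and follows essentially the same route as the paper: the paper argues by contradiction that an instance attaining the supremum would force $U^L(G,r,\ell,f^\star) = U^A(G,r,\ell,f^\star) > 1$ (implicitly using $U^L(\mathcal{L}) = U^A(\mathcal{L})$ from \cref{thm: 2}), contradicting \cref{prop: UDJ and U}, while you run the same chain of inequalities directly. The only cosmetic difference is your explicit case split on whether the instance-level unfairness exceeds $1$, which the paper leaves implicit.
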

\begin{proof}
We prove the first part. If an instance $(G, r, \ell)$ and optimal flow $f^\star$ attain the upper bound $U^A(\mathcal{L})$, then we have
\begin{equation}
U^{L}(G, r, \ell, f^\star) = U^A(G, r, \ell, f^\star) > 1,
\end{equation}
violating \cref{prop: UDJ and U}.
The second part follows similarly from \cref{prop: UDJ and UTR}.
\end{proof}

Finally, we show that there is no general relationship between the UE unfairness and average unfairness; specifically, they can be drastically different in different instances.
\begin{proposition}
\label{prop: UTR and U}
Given any instance $(G, r, \ell)$ and feasible flow $f$, there is no relationship between $U^{UE}(G, r, \ell, f)$ and $U^A(G, r, \ell, f)$.
\end{proposition}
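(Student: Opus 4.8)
The plan is to establish Proposition \ref{prop: UTR and U} by exhibiting two families of instances: one in which $U^{UE}$ is arbitrarily larger than $U^A$, and another in which $U^A$ is arbitrarily larger than $U^{UE}$. Since the proposition asserts the absence of any general inequality in either direction, a constructive proof by examples is the natural route, paralleling the style of \cref{ex:UL>UE=1}, \cref{ex:UL=1>UE}, and \cref{ex:UL<UE} used earlier for the loaded-versus-UE comparison.

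For the direction $U^{UE} \gg U^A$, I would reuse (a variant of) the multi-commodity network in \cref{fig:multi-network}, or more simply a single-commodity Braess-type network. The key feature I want is a flow $f$ that is \emph{totally fair in the loaded sense} — all used paths have equal latency, so $U^A(G,r,\ell,f) = 1$ by \cref{prop: UDJ and U} — yet whose common path latency is large compared to the Nash latency, forcing $U^{UE}(G,r,\ell,f)$ to be large. Concretely, take $f$ to route flow on a single path whose latency at $f$ is some large value $M$, while the Nash flow spreads out to achieve a small latency; then $U^{UE} = M/\ell_Q(f^{NE})$ can be made arbitrarily large while $U^A = 1$. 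Example \ref{ex:UL<UE} already nearly does this, so adapting it (or rescaling latencies) suffices.

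For the reverse direction $U^A \gg U^{UE}$, I would take a Pigou-type network as in \cref{fig:pigou} with a steep latency function, mimicking \cref{ex: lb}: with $\ell(x) = x^n$ and a small constant edge $c = \epsilon$, the \emph{optimal} flow already has $U^A \to n+1$. But I actually need $U^{UE}$ small simultaneously — note that for an optimal flow the Nash latency is an \emph{upper} bound on the min used latency, so I instead want an arbitrary feasible flow (not necessarily optimal) where envy-toward-the-best is large but the comparison to the Nash latency is small. The cleanest choice: a flow that puts almost all traffic on the cheap edge $c$ and a tiny bit on the steep edge $x^n$ in a regime where the steep edge is heavily loaded by \emph{other} demand — i.e., again a two-commodity or multi-edge construction — so that $\max_P \ell_P(f)$ is huge relative to the average $C_i(f)/r_i$... wait, that inflates $U^L$, not $U^A$. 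The right lever for making $U^A$ large is to have $C_i(f)/r_i$ large relative to $\min_Q \ell_Q(f)$, which \cref{ex: lb} achieves. To keep $U^{UE}$ small there, I want $\ell_Q(f^{NE})$ to be \emph{large} — so I would add background congestion (e.g. a third parallel edge or additional demand) that makes the Nash flow expensive, shrinking $U^{UE} = \max_P \ell_P(f)/\ell_Q(f^{NE})$ below $1$ while leaving $U^A$ near $n+1$.

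The main obstacle I anticipate is the second construction: unlike $U^A$ vs.\ $U^L$ (clean via \cref{prop: UDJ and U}) or $U^{UE}$ vs.\ $U^L$ for optimal single-commodity flows (clean via \cref{prop: UDJ and UTR}), there is no structural lemma forcing these two apart, so I must carefully engineer an instance where the minimum used latency at $f$, the average latency at $f$, and the Nash latency are all controlled independently. I expect to need at least two parallel edges plus either a second commodity or an extra "congesting" edge, and to verify by direct computation that $U^A$ can be pushed toward $\gamma(\mathcal{L})$ while $U^{UE}$ stays bounded (say, $\le 1$). The first construction should be routine — Example \ref{ex:UL<UE} essentially provides it, with $U^A = 1$ and $U^{UE} > 1$, and scaling the latency on the $s_1$–$t$ edge upward drives $U^{UE} \to \infty$. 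Once both instances are in hand, the proposition follows immediately since no single inequality can hold across both.
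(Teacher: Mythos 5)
Your overall strategy --- two explicit instances, one with $U^{UE}>U^A$ and one with $U^A>U^{UE}$ --- is exactly the paper's approach, and your first construction is sound: since the proposition concerns arbitrary \emph{feasible} flows, you may fix the flow in \cref{ex:UL<UE} that sends each commodity along a single path (so $U^A=1$) and scale the constant latency on the $s_1$--$t$ edge to $M$; the Nash flow still routes commodity~1 through $s_2$ at latency $1$, so $U^{UE}=M$ grows without bound. (The paper instead uses a single-commodity Pigou network with $c=(n+1)(1-\epsilon)$, where $U^A\to 1$ and $U^{UE}\to n+1$; both work.)

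The gap is in your second direction, which you flag as the ``main obstacle'' and leave unfinished --- but the obstacle is illusory, and the extra machinery you propose (background congestion to inflate $\ell_Q(f^{NE})$) is unnecessary and would force you to re-verify that $U^A$ survives the modification. \cref{ex: lb} already works verbatim: at the optimal flow the steep edge carries $(\epsilon/(n+1))^{1/n}$ and has latency $\epsilon/(n+1)$, the constant edge has latency $\epsilon$, so $\max_{P:f^{\star}_P>0}\ell_P(f^{\star})=\epsilon$; meanwhile the Nash flow equalizes both used paths at latency exactly $\epsilon$, so $U^{UE}(G,r,\ell,f^{\star})=\epsilon/\epsilon=1$ for every $\epsilon>0$, while $U^A\to n+1$. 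Your worry stems from conflating the two ratios: the Nash latency upper-bounds the \emph{minimum} used latency of the optimal flow, which constrains $U^{UE}$ relative to $U^L$, but says nothing that prevents $U^A$ (whose numerator is the \emph{average}, not the maximum) from being large while $U^{UE}=1$. Once you compute $U^{UE}$ in \cref{ex: lb} directly, the proof closes with no new construction needed.
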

\begin{proof}
Consider Example \ref{ex: lb}, where $\lim_{\epsilon \rightarrow 0} U^{A}(G, r, \ell, f^\star) = n+1$ with the optimal flow $f^\star$. However, we have $U^{UE}(G, r, \ell, f^\star) = 1$ for every $\epsilon > 0$.
%
In contrast, consider the Pigou network in \cref{fig:pigou} with $\ell(x) = x^n$ and $c = (n+1)(1-\epsilon)$ (see \cite{10.5555/545381.545406}). Then $U^A(G, r, \ell, f^\star) = (n+1)-n(1-\epsilon)^{\frac{1}{n}} \xrightarrow[\epsilon \rightarrow 0]{} 1$, but $U^{UE}(G, r, \ell, f^\star) = \frac{(n+1)(1-\epsilon)}{1} \xrightarrow[\epsilon \rightarrow 0]{} n+1$.
\end{proof}

\section{Constrained System Optimum (CSO)}
In Section \ref{sec: comparisons}, we analyzed the worst-case unfairness for a class of instances across the different measures and compared the values of unfairness produced by these measures at a given instance and flow. 
A closely related question is that of the constrained system optimum problem first proposed in \cite{doi:10.1287/opre.1040.0197}. 
In this model, the goal is to find a feasible flow $f$ that minimizes the total latency with constraints on the unfairness it induces among the population. More formally, the CSO problem can be formulated as 
\begin{align}
    (CSO)\quad  \underset{f: U(f) \leq 1 + \tol}{\arg\min} C(f),
\end{align}
where $U$ is the unfairness measure of choice and $\tol \geq 0$ is the unfairness tolerance. 
%
Different unfairness measures give different solutions to this problem. 
In this section, we focus on comparing the solution sets of the CSO problem under loaded unfairness and average unfairness constraints. 
We denote by $OPT^L(G, r, \ell, \tol)$ and $OPT^A(G, r, \ell, \tol)$ the sets of optimal flows under the loaded unfairness and average unfairness constraints, respectively.

\begin{equation}
\begin{aligned}
OPT^L(G, r, \ell, \tol) &= \underset{f: U^L(f) \leq 1 + \tol}{\arg\min} C(f), \\\quad \quad \quad  
OPT^A(G, r, \ell, \tol) &= \underset{f: U^A(f) \leq 1 + \tol}{\arg\min} C(f).   
\end{aligned}
\end{equation}
Also, denote the cost of the respective sets by
\begin{equation}
\begin{aligned}
C^L(G, r, \ell, \tol) &= \underset{f: U^L(f) \leq 1 + \tol}{\min} C(f), \\\quad \quad \quad 
C^A(G, r, \ell, \tol) &= \underset{f: U^A(f) \leq 1 + \tol}{\min} C(f). 
\end{aligned}
\end{equation}
An immediate consequence of Proposition \ref{prop: UDJ and U} is the following corollary.
\begin{corollary}
\label{cor: better}
Given any instance $(G, r, \ell)$ and unfairness tolerance $\tol \geq 0$, $C^A(G, r, \ell, \tol) \leq C^L(G, r, \ell, \tol)$.
\end{corollary}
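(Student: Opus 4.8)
The plan is to exploit the pointwise inequality between the two unfairness measures established in \cref{prop: UDJ and U}, namely that $U^A(G, r, \ell, f) \leq U^L(G, r, \ell, f)$ for every feasible flow $f$ (this is exactly the inequality $U^L \geq U^A$ derived at the start of that proof, before the characterization of the equality case). The corollary then follows purely from a feasible-set-inclusion argument: the loaded-unfairness constraint is the more restrictive of the two.

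Concretely, I would argue as follows. Fix the instance $(G, r, \ell)$ and the tolerance $\tol \geq 0$. Let $\mathcal{F}^L = \{f \text{ feasible} : U^L(G,r,\ell,f) \leq 1 + \tol\}$ and $\mathcal{F}^A = \{f \text{ feasible} : U^A(G,r,\ell,f) \leq 1 + \tol\}$ denote the feasible regions of the two CSO problems. If $f \in \mathcal{F}^L$, then by \cref{prop: UDJ and U} we have $U^A(G,r,\ell,f) \leq U^L(G,r,\ell,f) \leq 1 + \tol$, so $f \in \mathcal{F}^A$; hence $\mathcal{F}^L \subseteq \mathcal{F}^A$. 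Both feasible regions are nonempty (any Nash flow, or indeed any flow with $U^L = U^A = 1$, lies in $\mathcal{F}^L \subseteq \mathcal{F}^A$), and the objective $C(\cdot)$ is the same in both problems. Minimizing the same objective over a larger set can only decrease the optimal value, so $C^A(G,r,\ell,\tol) = \min_{f \in \mathcal{F}^A} C(f) \leq \min_{f \in \mathcal{F}^L} C(f) = C^L(G,r,\ell,\tol)$.

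There is essentially no obstacle here: the content is entirely contained in \cref{prop: UDJ and U}, and the rest is the elementary monotonicity of an infimum under set inclusion. The only point worth a sentence of care is confirming that the minima are attained (so that the notation $C^L, C^A$ is well-defined and the inequality is between genuine minimum values rather than infima) — this is inherited from the fact that $C$ is continuous and the constraint sets are closed, which is already implicit in the way $OPT^L$ and $OPT^A$ are introduced in the surrounding text. I would not belabor this. If one wished to avoid even invoking attainment, the same one-line inclusion argument gives the inequality for infima directly.
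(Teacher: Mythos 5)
Your argument is correct and is exactly the one the paper intends: the corollary is stated as an immediate consequence of \cref{prop: UDJ and U}, and the intended reasoning is precisely the pointwise inequality $U^A \leq U^L$ giving inclusion of feasible sets, hence monotonicity of the minimum. Nothing further is needed.
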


In some instances, the average unfairness constraint can strictly improve upon loaded unfairness; that is, \cref{cor: better} holds with strict inequality (see \cref{ex:ca<cl}). 
However, this is not always the case (see \cref{ex:ca=cl}). 

\begin{example} 
\label{ex:ca<cl}
Construct $(G, r, \ell)$ as follows: consider one unit of traffic ($r=1$) traveling through the Pigou network in \cref{fig:pigou} with $\ell(x) = x$ and $c = 1+\tol$, where $\tol \in (0, 1)$.
The optimal flow of this instance $f^{\star}$ routes $\frac{1+\tol}{2}$ on the upper edge and $\frac{1-\tol}{2}$ on the lower edge with $C(f^\star) = \frac{3+2\tol - \tol^2}{4} < 1$.
The optimal flow under the loaded unfairness $\{f^{L}\} = OPT^{L}(G, r, \ell, \tol)$ routes all traffic on the upper edge with unfairness of $U^L(f^{L}) = 1$ and cost $C^{L}(G, r, \ell, \tol) = 1$. However, for the average unfairness, we can improve the cost of $f^{L}$ by moving a small amount of traffic to the lower edge without violating the unfairness constraint. 

\end{example} 

\begin{example}
\label{ex:ca=cl}
Construct $(G, r, \ell)$ as follows: consider $3/4$ unit of traffic ($r=3/4$) traveling through the Braess's network in \cref{fig:Braess}. The optimal flow $f^{\star}$ distributes the traffic evenly across all three paths ($f^{\star}_{P_1} = f^{\star}_{P_2} = f^{\star}_{P_3} = \frac{1}{4}$) with a total cost of 1. Consider a small unfairness tolerance $\beta > 0$. Simple calculation shows that the flow $f^\beta$ defined by $f^\beta_{P_1} = f^\beta_{P_3} = 3/8$ and $f^\beta_{P_2} = 0$ is the unique optimal flow under both unfairness measures:
\begin{equation}
OPT^L(G, r, \ell, \beta) = OPT^A(G, r, \ell, \beta) = \{f^\beta\}.
\end{equation}
Therefore, while $f^\beta$ is not optimal in terms of total latency, and its loaded unfairness $U^L(G, r, \ell, f^\beta) = 1$ is strictly less than $1+\beta$, there is no feasible direction to improve upon it, even when considering average unfairness.
\end{example} 

\subsection{Sufficient Conditions for Strict Improvement}
We derive a sufficient condition for the constrained system optimal flow under average unfairness to achieve a strictly lower cost than that under loaded unfairness; that is, $C^A(G, r, \ell, \tol) < C^L(G, r, \ell, \tol)$.
This strict difference in cost is notable for two reasons: i) \cref{prop: UDJ and U} compares the loaded and average unfairness of the unconstrained optimal flow while here we compare the cost of two different flows with the same value of unfairness in each respective measure, thus providing greater nuance to the comparison of the two measures, and ii) should practitioners seek to solve the CSO problem in an effort to identify desirable flows, if the loaded unfairness measure is used, it can provide flows with greater envy across the population while also having higher cost.
The condition for the strict difference in cost can be checked by looking at the paths utilized under the solution $f\in OPT^{L}(G, r, \ell, \beta)$.

We first give some intuition. Consider a flow $f \in OPT^L(G, r, \ell, \tol)$ that is not a system optimal flow (i.e., $f\notin OPT(G,r, \ell)$). Then there exists an improvement direction, that is, a feasible flow $g$ such that $C(\delta g + (1-\delta) f) < C(f)\ \forall \delta \in (0, 1]$. This follows directly from the convexity of the (unconstrained) system optimum problem (under \cref{assump: 1}). One only needs to check if there exists $\delta \in (0,1]$ such that $h = \delta g + (1-\delta) f$ does not violate the average unfairness constraint $U^A(h) \leq 1+\beta$. This is equivalent to checking the continuity of $U^A$ around $f$, along with the fact that $U^{A}(f) < 1+\beta$ (see \cref{prop: UDJ and U}). 

We show that a sufficient condition for this to hold is if $f$ utilizes any of the minimum-latency paths for each commodity. To formalize this, we introduce the following notation for the set of utilized paths $P^{+}(f)$, the set of minimum-latency paths per commodity $P^{i}_{\min}(f)$, and the set of minimum-latency \emph{utilized} paths per commodity $P^{i}_{\min}(f)$:
\begin{align}
    &P^{+}(f)  := \{P\in \paths | f_P>0\}. \\
    &P^{i}_{\min}(f)  := \arg\min_{P\in \paths_i} \ell_{P}(f).\\
    &P^{i, +}_{\min}(f)  := \arg\min_{P\in \paths_i, f_P > 0} \ell_{P}(f).
\end{align}

\begin{proposition}
\label{cor: min latency}
Given any instance $(G, r, \ell)$ satisfying \cref{assump: 1} and unfairness tolerance $\tol > 0$, if there exists $f \in OPT^L(G, r, \ell, \tol)$ such that $P^{+}(f) \cap P^{i}_{\min}(f) \neq \emptyset$ for all $i \in \{1,2, \dots, k\}$, then one of the following holds
\begin{enumerate}
    \item $C^L(G, r, \ell, \tol) = C^A(G, r, \ell, \tol) = C^{OPT}(G, r, \ell)$.
    \item $C^L(G, r, \ell, \tol) > C^A(G, r, \ell, \tol) \geq C^{OPT}(G, r, \ell)$.
\end{enumerate}
\end{proposition}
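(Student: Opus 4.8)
The plan is to reduce the statement to a single implication: \emph{if $C^L(G,r,\ell,\tol) > C^{OPT}(G,r,\ell)$, then $C^L(G,r,\ell,\tol) > C^A(G,r,\ell,\tol)$.} Granting this, the dichotomy follows immediately, because one always has $C^{OPT}(G,r,\ell) \le C^A(G,r,\ell,\tol) \le C^L(G,r,\ell,\tol)$ — the left inequality since the $A$-constrained feasible set sits inside the set of all feasible flows, the right inequality being \cref{cor: better}. If $C^L = C^{OPT}$ the three quantities coincide (item 1); if instead $C^L > C^{OPT}$ the implication gives $C^L > C^A \ge C^{OPT}$ (item 2); and these two cases are exhaustive since $C^L \ge C^{OPT}$.

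So suppose $C^L(G,r,\ell,\tol) > C^{OPT}(G,r,\ell)$ and fix $f \in OPT^L(G,r,\ell,\tol)$ with $P^{+}(f)\cap P^{i}_{\min}(f)\neq\emptyset$ for every $i$. Then $C(f)=C^L > C^{OPT}$, so $f$ is not a system optimum. Under \cref{assump: 1} the cost $C(h)=\sum_{e}\ell_e(h_e)h_e$ is convex in the edge flows, and a convex combination of feasible flows has edge flows equal to the corresponding convex combination; hence, taking any optimal flow $f^{\star}$, the segment $h_\delta := \delta f^{\star} + (1-\delta) f$ (itself feasible) satisfies $C(h_\delta) \le (1-\delta)C(f) + \delta C(f^{\star}) < C(f)$ for every $\delta\in(0,1]$. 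It therefore suffices to produce one $\delta\in(0,1]$ with $U^A(h_\delta)\le 1+\tol$: then $h_\delta$ is admissible for the $A$-constrained problem and strictly cheaper than $f$, so $C^A \le C(h_\delta) < C(f) = C^L$.

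To produce such a $\delta$ I would combine two facts. First, $U^A(f) < 1+\tol$ \emph{strictly}: indeed $U^A(f)\le U^L(f)\le 1+\tol$ since $f$ is $L$-admissible, and equality $U^A(f)=1+\tol$ would force $U^A(f)=U^L(f)$, which by \cref{prop: UDJ and U} can only happen when both equal $1$ — impossible for $\tol>0$. Second, and this is where the hypothesis enters, $\delta\mapsto U^A(h_\delta)$ is right-continuous at $\delta=0$. The numerators are harmless: each $C_i(h_\delta)=\sum_{Q\in\paths_i}(h_\delta)_Q\,\ell_Q(h_\delta)$ is continuous in $\delta$ because edge flows are affine in $\delta$ and the $\ell_e$ are continuous. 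The delicate quantity is the denominator $D_i(h_\delta):=\min\{\ell_Q(h_\delta): Q\in\paths_i,\ (h_\delta)_Q>0\}$, whose index set changes with $\delta$. The hypothesis gives $D_i(f)=\min_{Q\in\paths_i}\ell_Q(f)$, which sandwiches
\[
\min_{Q\in\paths_i}\ell_Q(h_\delta)\ \le\ D_i(h_\delta)\ \le\ \min_{Q\in\paths_i,\ f_Q>0}\ell_Q(h_\delta),
\]
both bounds being minima of finitely many continuous functions of $\delta$, the left converging to $\min_{Q\in\paths_i}\ell_Q(f)=D_i(f)$ and the right to $\min_{Q:f_Q>0}\ell_Q(f)=D_i(f)$ (every path used by $f$ stays used for $\delta<1$). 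Hence $D_i(h_\delta)\to D_i(f)$, so when $D_i(f)>0$ we get $U^A_i(h_\delta)\to U^A_i(f)$ and $U^A(h_\delta)=\max_i U^A_i(h_\delta)\to U^A(f)<1+\tol$, giving the desired $\delta$.

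The hard part — and the reason the support hypothesis cannot be dropped — is precisely the (dis)continuity of the $D_i$'s: without it a small perturbation can activate a very low-latency path (collapsing a denominator) or, for $U^L$, activate a very high-latency path (inflating a numerator); it is exactly because a newly activated path carries only $O(\delta)$ flow that the \emph{average} numerator $C_i(h_\delta)$ moves by only $O(\delta)$, so only the loaded measure is destabilized. One subtlety still requires a separate, more careful argument: the degenerate commodities with $D_i(f)=0$ (equivalently, every $f$-used path of that commodity has zero latency at $f$, so $C_i(f)=0$ and $U^A_i(f)=1$ by the $0/0:=1$ convention), where the ratio $C_i(h_\delta)/(r_i D_i(h_\delta))$ is of the form $0/0$ along the segment. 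Here I would exploit \cref{assump: 1}: an edge $e$ with $\ell_e(f_e)=0$ and $f_e>0$ interior to its domain has $\ell_e'(f_e)=0$ by differentiability, so its latency along the segment is $o(\delta)$, and one tracks these $o(\delta)$ contributions against $D_i(h_\delta)$ (if necessary choosing the improvement direction so that it does not reroute commodity $i$ onto positive-latency paths) to keep $U^A_i(h_\delta)\le 1+\tol$ for $\delta$ small.
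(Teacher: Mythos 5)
Your proposal is correct and follows essentially the same route as the paper's own proof: convexity of the cost gives an improving direction from any non-optimal $f\in OPT^L(G,r,\ell,\tol)$, \cref{prop: UDJ and U} gives the strict slack $U^A(f)<1+\tol$, and the support hypothesis $P^{+}(f)\cap P^{i}_{\min}(f)\neq\emptyset$ is used exactly as in the paper to make the minimum used-path latency (hence $U^A$) continuous along the segment toward the optimum. You are in fact slightly more careful than the paper, whose proof does not address the degenerate $0/0$ commodities you flag at the end.
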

\begin{proof} We state the proof for the single commodity case, which extends straightforwardly to instances with multiple commodities. 
If $C^L(G, r, \ell, \tol) > C^{OPT}(G, r, \ell)$, then there exists a feasible flow $g$ such that $C(\delta g + (1-\delta) f) < C(f), \forall  \delta \in (0,1]$. By \cref{prop: UDJ and U}, we know that $U^A(f) < 1 + \beta $. We show that $P^{+}(f) \cap P_{\min}(f) \neq \emptyset$ implies the continuity of the average unfairness measure $U^A$ around $f$, which then implies existence of $\delta$ small enough such that $U^A(\delta g + (1-\delta) f) \leq 1+ \beta$. 

By our assumption that $P^{+}(f) \cap P_{\min}(f) \neq \emptyset$, it follows that 
\begin{equation}
\min_{P \in P^+(f)} \ell_P(f) = \min_{P \in \mathcal{P}} \ell_P(f),
\end{equation}
and we denote this common value by $c^\star$. Let $h = \delta g + (1-\delta) f$. By continuity of the latency functions, the following hold when $\delta$ is sufficiently small.
\begin{enumerate}
    \item $\ell_P(h) < \ell_Q(h), \forall P \in P_{\min}(f), Q \notin P_{\min}(f)$.
    \item $P^+(h) \cap P_{\min}(f) \neq \emptyset$ (since $P^{+}(f) \cap P_{\min}(f) \neq \emptyset$).
\end{enumerate}

Thus, $P_{\min}^+(h) \subseteq P_{\min}(f)$, and we can make $\min_{P \in P^+(h)} \ell_P(h)$ arbitrarily close to $c^\star$.
\end{proof}
The condition in \cref{cor: min latency} need not hold in general networks: a CSO solution may not utilize a minimum-latency path like in \cref{ex:ca=cl}. It is, however, guaranteed to hold for parallel-link networks (i.e., single-commodity graphs with two nodes connected by a finite set of parallel edges) as we show in \cref{cor: strict improve}. 

\begin{corollary}
\label{cor: strict improve}
Given any parallel-link instance $(G, r, \ell)$ satisfying \cref{assump: 1} and unfairness tolerance $\tol > 0$, one of the following holds.
\begin{enumerate}
    \item $C^L(G, r, \ell, \tol) = C^A(G, r, \ell, \tol) = C^{OPT}(G, r, \ell)$.
    \item $C^L(G, r, \ell, \tol) > C^A(G, r, \ell, \tol) \geq C^{OPT}(G, r, \ell)$.
\end{enumerate}     
\end{corollary}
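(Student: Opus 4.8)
The plan is to reduce Corollary~\ref{cor: strict improve} to Proposition~\ref{cor: min latency} by showing that in a parallel-link network, \emph{every} optimal flow $f \in OPT^L(G, r, \ell, \tol)$ automatically satisfies the hypothesis $P^{+}(f) \cap P^{1}_{\min}(f) \neq \emptyset$. Since the network has a single commodity and each path is a single edge, $P^+(f)$ is the set of edges carrying positive flow and $P^1_{\min}(f)$ is the set of edges (used or unused) of minimum latency $\ell_e(f_e)$ at $f$. So the claim I need is: under the loaded-unfairness-constrained optimum $f$, at least one minimum-latency edge carries positive flow.

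First I would argue this by contradiction: suppose $f \in OPT^L(G, r, \ell, \tol)$ but every edge of minimum latency is unused, i.e. $\min_{e \in P^+(f)} \ell_e(f_e) > \min_{e \in E} \ell_e(f_e)$. Pick an unused edge $e_0$ attaining the global minimum latency and a used edge $e_1$ attaining the maximum latency among used edges (this is the edge witnessing, or capable of witnessing, the loaded unfairness). The idea is to shift a small amount $\delta > 0$ of flow from $e_1$ to $e_0$. I would check three things: (i) this shift strictly decreases the cost $C(f)$ for small $\delta$ — this follows because moving mass from a higher-latency edge to a strictly-lower-latency edge decreases total latency to first order, using the marginal-cost/convexity characterization under Assumption~\ref{assump: 1} (the marginal cost $\hat\ell_{e_1}(f_{e_1}) \geq \ell_{e_1}(f_{e_1}) > \ell_{e_0}(f_{e_0})$; and if the marginal costs were already ordered the other way $f$ would not even be a loaded-constrained optimum in a way I'd need to rule out, but the strict latency gap is what drives it); (ii) the shift does not increase loaded unfairness above $1 + \tol$ for small $\delta$ — this is where continuity of the latency functions enters: as $\delta \to 0$ the edge latencies move continuously, so the ratio of max-used to min-used latency changes continuously, and since $f$ satisfies the constraint with the ``max'' achieved at $e_1$ whose latency is being lowered, the constraint stays satisfied (I should handle the boundary subtlety that $e_0$ becomes newly used with small latency, which only helps the min side). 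That contradicts optimality of $f$, establishing the hypothesis of Proposition~\ref{cor: min latency}, and the two-case conclusion then transfers verbatim.

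The main obstacle I anticipate is the continuity/boundary bookkeeping in step (ii): introducing flow on a previously-unused edge $e_0$ changes which edges are ``used,'' so the set over which the max and min in the definition of $U^L$ are taken changes discontinuously at $\delta = 0$. I need to argue that this change can only \emph{decrease} (or not increase) $U^L$ for small $\delta$: the newly-used edge $e_0$ has the smallest latency, so it can only lower the denominator's minimum or — more carefully — since $\ell_{e_0}(f_{e_0}) = \min_e \ell_e(f_e)$ is strictly below all used-edge latencies, for small $\delta$ the edge $e_0$ still has strictly minimal latency among used edges and contributes only to the ``min'' side, while $e_1$'s latency strictly decreases, so the worst-case ratio $\ell_{e_1}/\ell_{e_0}$ does not increase. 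One edge case worth a sentence: if $f$ is already a system optimum, then $C^L = C^A = C^{OPT}$ and we are in the first alternative; otherwise the argument above plus Proposition~\ref{cor: min latency} gives the second. I would also note briefly that ``parallel-link'' guarantees $k = 1$ and that every path is a single edge so $P^+(f) \cap P^1_{\min}(f) \neq \emptyset$ is exactly the condition needed — no multi-commodity or multi-edge-path complications arise.

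\begin{proof}[Proof sketch]
Since a parallel-link network is single-commodity with every $s$-$t$ path consisting of one edge, for any feasible $f$ the set $P^+(f)$ is the set of edges with $f_e > 0$ and $P^1_{\min}(f)$ is the set of edges minimizing $\ell_e(f_e)$. By Proposition~\ref{cor: min latency} it suffices to exhibit $f \in OPT^L(G, r, \ell, \tol)$ with $P^+(f) \cap P^1_{\min}(f) \neq \emptyset$. If some $f \in OPT^L(G, r, \ell, \tol)$ is a system optimum, the first alternative holds. Otherwise, fix any $f \in OPT^L(G, r, \ell, \tol)$ and suppose toward a contradiction that every minimum-latency edge is unused, so that $\min_{e \in P^+(f)} \ell_e(f_e) > \min_{e \in E}\ell_e(f_e)$. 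Choose an unused edge $e_0$ with $\ell_{e_0}(f_{e_0}) = \min_{e \in E}\ell_e(f_e)$ and a used edge $e_1$ with $\ell_{e_1}(f_{e_1}) = \max_{e \in P^+(f)}\ell_e(f_e)$, and let $f^\delta$ be obtained from $f$ by moving $\delta \in (0, f_{e_1}]$ units from $e_1$ to $e_0$. By continuity of $\ell_{e_0}, \ell_{e_1}$, for small $\delta$ we have $\ell_{e_0}(\delta) < \ell_{e}(f^\delta_e) < \ell_{e_1}(f_{e_1})$ for all other used edges $e$, so the min over used edges is attained at $e_0$ and the max is attained at (or below) $\ell_{e_1}(f_{e_1})$, hence $U^L(f^\delta) \leq U^L(f) \leq 1 + \tol$; thus $f^\delta$ is feasible for the loaded-constrained problem. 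On the other hand, moving mass from an edge of strictly higher latency to one of strictly lower latency strictly decreases $C$ for small $\delta > 0$, contradicting $f \in OPT^L(G, r, \ell, \tol)$. Therefore every $f \in OPT^L(G, r, \ell, \tol)$ satisfies $P^+(f) \cap P^1_{\min}(f) \neq \emptyset$, and Proposition~\ref{cor: min latency} yields the stated dichotomy.
\end{proof}
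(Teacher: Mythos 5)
There is a genuine gap in step (ii), the feasibility of the perturbed flow $f^\delta$ under the loaded-unfairness constraint. You correctly note that after shifting $\delta$ units onto the previously unused edge $e_0$, the minimum over \emph{used} edges is now attained at $e_0$ with latency $\ell_{e_0}(\delta)\approx\ell_{e_0}(0)$, which is strictly below the old minimum $\min_{e\in P^{+}(f)}\ell_e(f_e)$. But this means the denominator of $U^L$ drops by a fixed positive amount while the numerator decreases only by $O(\delta)$, so $U^L(f^\delta)>U^L(f)$ for small $\delta$ --- the opposite of your conclusion. Lowering the minimum used latency makes the flow \emph{less} fair under the loaded measure, not more. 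In particular, if the constraint is binding at $f$ (say the used latencies are $1$ and $1+\tol$ and the unused edge has free-flow latency $1/2$), the shift produces $U^L\approx 2(1+\tol)>1+\tol$, so $f^\delta$ is infeasible and no contradiction with the optimality of $f$ is obtained. The hypothesis of \cref{cor: min latency} is therefore not established by this argument, even though the claim itself is true.

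The paper avoids exactly this trap by making the transfer non-infinitesimal: it iteratively moves flow from the maximum-latency used paths onto the unused minimum-latency path $P_1$ until either $\ell_1$ rises to meet the minimum latency among the other used paths, or all flow ends up on $P_1$. In the first case the minimum used latency is unchanged while the maximum has not increased, so $U^L(g)\le U^L(f)$; in the second case $U^L(g)=1$. The price of the larger transfer is that the cost comparison $C(g)<C(f)$ can no longer be a first-order estimate; the paper proves it by a global rearrangement computation using monotonicity of the latency functions. Your step (i) and the overall reduction to \cref{cor: min latency} are sound; only the feasibility step needs to be replaced by a construction of this kind.
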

\begin{proof}
It suffices to show that any flow $f \in OPT^L(G, r, \ell, \tol)$ utilizes the minimum-latency path: $P^{+}(f) \cap P_{\min}(f) \neq \emptyset$. The rest follows from \cref{cor: min latency}.

Assume that there are $n$ edges with $\mathcal{P} = \{P_1, P_2, ..., P_n\}$, and the total flow is $r$. Suppose that a flow $f \in OPT^L(G, r, \ell, \tol)$ does not utilize any minimum-latency path; that is, $P^{+}(f) \cap P_{\min}(f) = \emptyset$, and let $P_1$ be one such unutilized minimum-latency path. 

We then construct a new flow $g$ by iteratively transferring flows from the maximum-latency paths to $P_1$ until one of the following conditions is met. For simplicity of notation, we abbreviate a path $P_j$ as $j$ in what follows.
\begin{enumerate}
    \item $\ell_1(g) = \arg\min_{j \in \{2, 3, ..., n\}, g_j > 0} \ell_j(g_j)$. 
    \item $g_1 = r$, meaning that all the flows are moved to $P_1$.
\end{enumerate}
Then, the transformed flow $g$ satisfies the loaded unfairness constraint since either all utilized paths have the same latency or the minimum latency is unchanged while the maximum latency decreases. Also, the following relationships hold.
\begin{enumerate}
    \item $\ell_1(g_1) \leq \arg\min_{j \in \{2, 3, ..., n\}, f_j > 0} \ell_j(f)$.
    \item $\ell_i(g_i) \leq \ell_i(f_i), \forall i \in \{2, 3, ..., n\}$.
\end{enumerate}

Let $y = f - g$ be the difference of the two flows. We will show that $C(f) > C(g)$. 
\begin{equation}
\begin{aligned}
&C(f) - C(g) \\
&= \sum_{i=2}^n f_i\ell_i(f_i) - \left(\sum_{i=2}^n g_i\ell_i(g_i) + (\sum_{i=2}^n (y_i))\cdot\ell_1(\sum_{j=2}^n (y_j)) \right) \\
&= \left(\sum_{i=2}^n g_i\cdot(\ell_i(f_i) - \ell_i(g_i))\right) + \left(\sum_{i=2}^n y_i\cdot(\ell_i(f_i) - \ell_1(\sum_{j=2}^n (y_j)))\right)\\
&> 0,
\end{aligned}
\end{equation}
where the first term is non-negative since the latency functions are increasing, and the second term is positive by our construction of the transformed flow $g$.
\end{proof}

The construction in \cref{cor: strict improve} is in fact more general. In particular, the same argument can be used to show the following stronger condition for parallel-link networks: 
\begin{equation}
\label{eq:parallel_cond}
\max_{P \in \mathcal{P}, f_P > 0} \ell_P(f_P) \leq \min_{P \in \mathcal{P}, f_P = 0} \ell_P(0)
\end{equation}
for any solution $f$ of the CSO problem under any of the three unfairness measures. This has an important implication on the set of used paths under any CSO flow as stated in \cref{cor:order_utilized}\footnote{The construction of $g$ does not require the CSO problem to be convex, so \cref{assump: 1} is not needed for \cref{cor:order_utilized}.}. 


\begin{corollary}
\label{cor:order_utilized}
Consider a parallel-link instance $(G, r, \ell)$ with $n$ edges. For any $\beta \geq 0$, choice of unfairness measure $U \in \{A, L, UE\}$ and solution $f \in OPT^U(G, r, \ell, \tol)$, there exists an ordering of the links such that $\ell_1(0) \leq \ell_2(0) \leq ... \leq \ell_n(0)$ for which the set of utilized paths has the form $P^{+}(f) = [m]$ for some $m\leq n$.  
\end{corollary}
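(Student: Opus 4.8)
### Proof Proposal for Corollary~\ref{cor:order_utilized}

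The plan is to fix an ordering of the links by their free-flow latencies $\ell_e(0)$, so that $\ell_1(0) \leq \ell_2(0) \leq \cdots \leq \ell_n(0)$, and then to argue that the set of utilized paths under any CSO solution (for any of the three unfairness measures) must be a prefix $[m]$ of this ordering. The core tool is the strengthened inequality~\eqref{eq:parallel_cond}, which asserts that for any CSO flow $f$ (under $U \in \{A, L, UE\}$), every utilized link has latency at most the free-flow latency of every unutilized link: $\max_{P \in \mathcal{P}, f_P > 0} \ell_P(f_P) \leq \min_{P \in \mathcal{P}, f_P = 0} \ell_P(0)$. I would first state and justify~\eqref{eq:parallel_cond} using the flow-transfer construction from the proof of \cref{cor: strict improve}: if some unutilized link $P_1$ has $\ell_1(0)$ strictly below the maximum utilized latency, then shifting a small amount of flow from a maximum-latency utilized link onto $P_1$ strictly decreases cost while not increasing any of the three unfairness measures (the minimum utilized latency is nondecreasing and the maximum is nonincreasing under such a transfer, and the Nash reference flow $f^{NE}$ is unchanged), contradicting optimality. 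The only subtlety relative to \cref{cor: strict improve} is that here we are not assuming $f$ fails to be a system optimum — but if~\eqref{eq:parallel_cond} fails, the same transfer argument shows $f$ was not cost-minimal among feasible flows satisfying the constraint, which is the needed contradiction regardless.

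Given~\eqref{eq:parallel_cond}, the combinatorial step is short. Let $e^{+} \in P^{+}(f)$ be a utilized link and $e^{-} \notin P^{+}(f)$ an unutilized link. Since latency functions are nondecreasing and $f_{e^{-}} = 0$, we have $\ell_{e^{-}}(0) = \ell_{e^{-}}(f_{e^{-}})$, and since $\ell_{e^{+}}$ is nondecreasing, $\ell_{e^{+}}(0) \leq \ell_{e^{+}}(f_{e^{+}})$. Combining with~\eqref{eq:parallel_cond} gives $\ell_{e^{+}}(0) \leq \ell_{e^{+}}(f_{e^{+}}) \leq \ell_{e^{-}}(0)$. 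Hence every utilized link has free-flow latency no greater than that of every unutilized link. It follows that, after sorting the links so that $\ell_1(0) \leq \cdots \leq \ell_n(0)$, the set $P^{+}(f)$ is downward closed in this order: if link $j$ is utilized and $i < j$, then $\ell_i(0) \leq \ell_j(0) \leq \ell_{e^{-}}(0)$ for every unutilized $e^{-}$, so link $i$ cannot be unutilized. Therefore $P^{+}(f) = [m]$ with $m = |P^{+}(f)| \leq n$.

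A minor point to handle carefully is ties in the free-flow latencies: when several links share the same value $\ell_e(0)$, the "downward closed" conclusion only pins down $P^{+}(f)$ up to how we break ties among those equal-valued links. This is harmless — the corollary only asserts the \emph{existence} of an ordering realizing the prefix structure, so I would break ties by placing all utilized links among an equal-valued group before the unutilized ones of the same value; the argument above then still yields $P^{+}(f) = [m]$. I would also note, as the footnote in the excerpt already does, that the flow-transfer construction does not invoke convexity of the CSO objective, so \cref{assump: 1} is not needed here.

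The main obstacle I anticipate is purely the bookkeeping in verifying that the flow transfer in~\eqref{eq:parallel_cond} does not increase the UE unfairness: one must check that decreasing the maximum utilized latency and weakly increasing the minimum utilized latency keeps $\max_P \ell_P(f) / \ell_Q(f^{NE}) \leq 1+\beta$, using that $f^{NE}$ and hence the denominator is untouched. For the loaded measure the same monotonicity of numerator (max utilized) and denominator (min utilized) suffices, and for the average measure it follows because average latency is squeezed between the min and max utilized latencies. Once~\eqref{eq:parallel_cond} is in hand, the rest is immediate.
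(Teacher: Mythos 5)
Your overall architecture is exactly the paper's: establish the inequality \eqref{eq:parallel_cond} by a flow-transfer argument, then read off the prefix structure by sorting links on free-flow latencies with ties broken in favor of utilized links. The combinatorial half of your argument is fine. The gap is in your justification of \eqref{eq:parallel_cond}. You claim that shifting a \emph{small} amount of flow from a maximum-latency utilized link onto the unutilized link $P_1$ leaves the minimum utilized latency nondecreasing. This is false precisely in the case the paper's construction is designed to handle, namely when $\ell_1(0) < \min_{P \in P^{+}(f)} \ell_P(f)$: activating $P_1$ with a tiny flow $y$ adds a utilized path of latency $\ell_1(y) \approx \ell_1(0)$, which becomes the new minimum and collapses the denominator of both the loaded and the average unfairness. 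Concretely, take three parallel links with $\ell_1(x)=x$, $\ell_2(x)=\ell_3(x)=1$, $r=1$, and $f=(0,\tfrac12,\tfrac12)$; moving a small $y>0$ onto link $1$ strictly lowers the cost but yields $U^{L} = U^{A} = 1/y \to \infty$, so no small perturbation is feasible. The proof of \cref{cor: strict improve} avoids this by transferring flow \emph{iteratively} from the currently maximal-latency links until $\ell_1(g_1)$ rises to meet the minimum latency of the remaining utilized links (or absorbs all flow); you need that full construction whenever $\ell_1(0) < \min_{P\in P^{+}(f)}\ell_P(f)$, and your small-perturbation argument only covers the complementary case $\ell_1(0) \ge \min_{P\in P^{+}(f)}\ell_P(f)$. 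The same issue infects your closing paragraph, where the claimed ``weakly increasing'' minimum is again what fails.

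A secondary point: both your argument and the paper's derive the contradiction from a \emph{strict} cost decrease, but when $\beta = 0$ the iterative transfer can terminate with $C(g) = C(f)$ (in the example above $g=(1,0,0)$ also has cost $1$), and indeed $(0,\tfrac12,\tfrac12)$ lies in $OPT^{L}(G,r,\ell,0)$ and $OPT^{A}(G,r,\ell,0)$ while violating the prefix property, since link $1$ has strictly the smallest free-flow latency yet carries no flow. So the conclusion genuinely requires $\beta>0$ (as assumed in \cref{cor: strict improve}) for the measures $L$ and $A$; this is a defect of the corollary as stated rather than of your write-up alone, but your proof should not claim to cover $\beta = 0$.
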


\Cref{cor:order_utilized} extends the characterization in \cite{10.1145/380752.380783} for the set of utilized paths by the Nash and optimal flows to the CSO flows under all three unfairness measures and general latency functions (that are nonnegative, continuous, and nondecreasing). We note here that when multiple paths have identical zero-flow latencies, different optimal flows may require different orderings of the paths.

\subsection{Numerical Experiments} 
We conduct numerical experiments on general networks to compare the CSO cost under loaded and average unfairness. While strict improvement in the CSO cost under average unfairness is not guaranteed, our experiments show that it occurs across many benchmark networks.

To obtain approximate CSO solutions, we employ the algorithm proposed by \cite{jalota2023balancing}, which solves a sequence of convex programs whose objective interpolates between the potential function of the routing game and the total cost function: 
\begin{align}
    \alpha C(f) + (1-\alpha )\sum_{e\in E} \int_{0}^{f_e}\ell_e(s) d s.
\end{align}
The interpolation parameter $\alpha \in [0,1]$ controls the tradeoff between fairness and efficiency: $\alpha=0$ yields the Nash flow, while $\alpha=1$ yields the system optimum.

We vary $\alpha$ in increments of $0.01$ and compute a flow $f^{\alpha}$ for each value; this procedure is agnostic to the unfairness measure. For each resulting flow $f^{\alpha}$, we compute both loaded and average unfairness values. This allows us to trace the Pareto frontier between fairness and total cost for the two measures. For clarity of presentation, instead of absolute total cost values, we use the inefficiency ratio 
$\rho(f) = C(f^{\alpha}) / C(f^{\star})$
where $f^{\star} \in OPT(G, r, \ell)$. The implementation details follow \cite{jalota2023balancing}, except that our unfairness measures depend explicitly on path assignments (see \cref{remark:unfairness_unique}).  

\begin{table}[h!]
\centering
\begin{tabular}{lccc}
\toprule
\multicolumn{1}{c}{\textbf{Network Name}} & 
\multicolumn{3}{c}{\textbf{attributes}} \\
\cmidrule(lr){2-4}
 & $\lvert V \rvert$ & $\lvert E \rvert$ & $k$ \\
\midrule
Anaheim           & 416 & 914 & 1406 \\
Sioux Falls       & 24  & 76  & 528  \\
Massachusetts     & 74  & 258 & 1113 \\
Friedrichshain    & 224 & 523 & 506  \\
\bottomrule
\end{tabular}
\caption{Number of vertices $|V|$, edges $|E|$, and commodities $k$ in the used benchmark networks.}
\label{tab:network_info}
\end{table}

The experiments are conducted on four benchmark networks \cite{transportationnetworks2016} described in \cref{tab:network_info}. We use the Bureau of Public Roads (BPR) link latency functions \cite{sheffi1985urban}:
\[
l_e(f_e) = \xi_e \left( 1 + a \left( \frac{f_e}{\kappa_e} \right)^{4} \right), 
\]
where $\xi_e, \kappa_e$ are properties of the link and $a = 0.15$. The results are summarized in \cref{fig:city_plots}. Across all networks and unfairness levels, average unfairness consistently achieves a strictly lower cost than loaded unfairness, except for a single data point in the Friedrichshain network. These experiments illustrate the benefit of using the average unfairness measure as the user constraint in the CSO problem in real-world networks.

\begin{figure}[h!]
    \centering
    \includegraphics[width=0.6\linewidth]{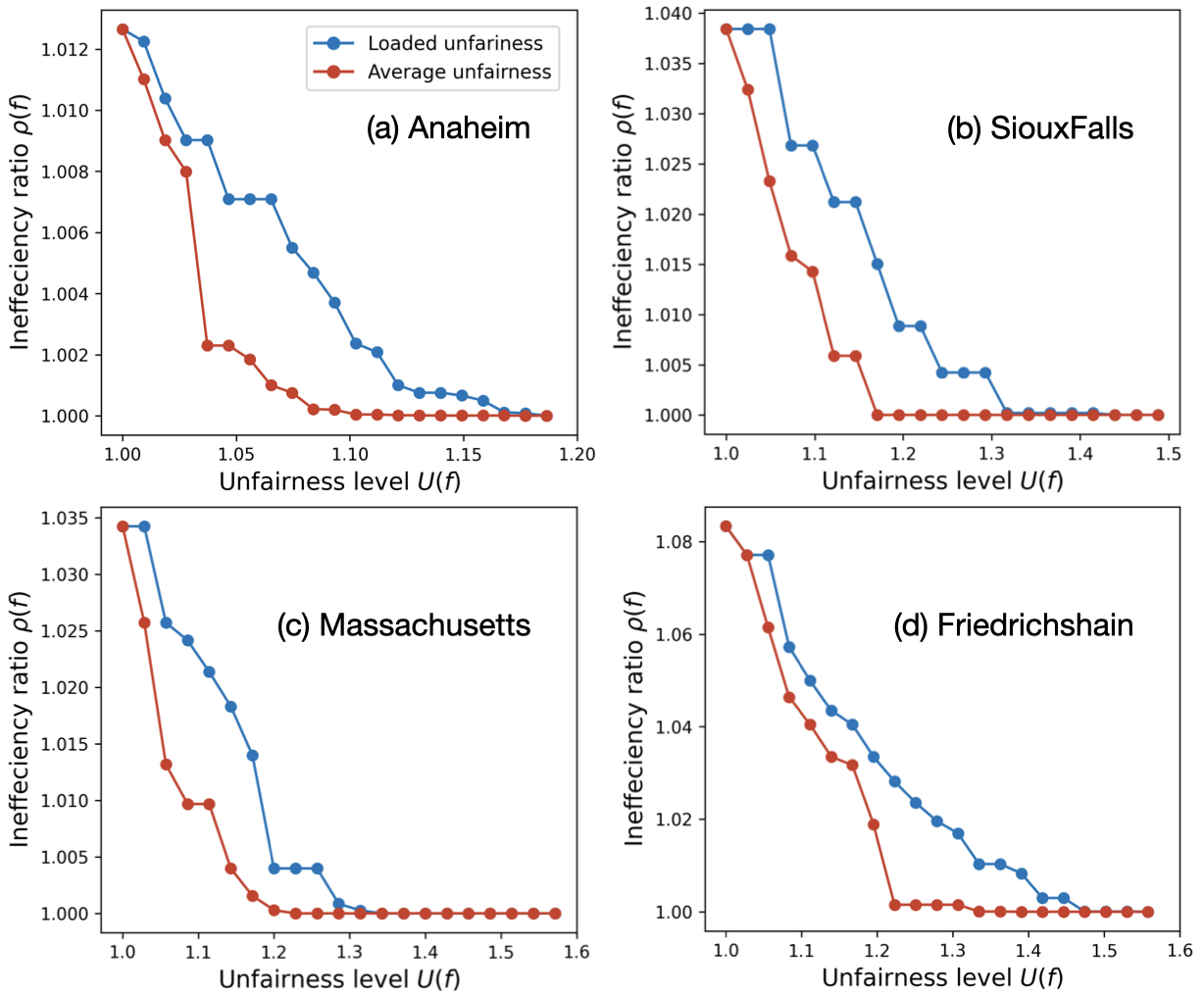}
    \caption{Comparison of CSO costs across varying unfairness levels under loaded (blue line) and average (red line) unfairness measures for four benchmark networks.}
    \label{fig:city_plots}
\end{figure}

\section{Conclusion and Future Work}
We introduced average unfairness as a new and intuitive measure of unfairness of feasible flows in routing games. 
We derived a tight worst-case bound on the unfairness of optimal flows under this new measure, and compared it to two commonly studied unfairness notions.
Our analysis also revealed implications for the constrained system optimum (CSO) problem, showing that average unfairness can lead to more efficient solutions than other fairness constraints.
%
%
An interesting direction for future work is to develop sharper conditions under which average unfairness leeds to strictly better cost than loaded unfairness in the CSO problem. 
We are also interested in studying bounds on the value of the CSO problem under average unfairness constraints, in the spirit of price of stability (POS) analyses (cf. \cite{christodoulou2011performance}).
Finally, we airm to explore broader applications of average unfairness in other contexts, like atomic congestion games, where fairness-efficiency trade-offs arise.

\balance





\bibliographystyle{abbrv} 
\bibliography{refs}

\section*{Acknowledgments}
This research was supported in part by the Simons Foundation Agency project Collaborative Research: Transferable, Hierarchical, Expressive, Optimal, Robust, Interpretable NETworks (THEORINET) Award MPS-MODL-00814647. We used ChatGPT and Grammarly for language editing and proofreading of some paragraphs.


\newpage
\appendix

\end{document}